\definecolor{palered}{rgb}{1,0.8,0.8}
\definecolor{mykeywords}{RGB}{0,0,255} 
\definecolor{mov}{RGB}{242,134,197} 
\definecolor{del}{RGB}{245,117,117} 
\definecolor{upd}{RGB}{219,175,38} 
\definecolor{ins}{RGB}{87,208,61}
\definecolor{myorange}{RGB}{250,153,45}
\definecolor{myblue}{RGB}{41,134,204}
\definecolor{mygreen}{RGB}{120,190,84}
\theoremstyle{definition}
\newtheorem{theorem}{Theorem}
\newtheorem{lemma}[theorem]{Lemma}
\newtheorem{definition}{Definition}[section]
\newtheorem{remark}{Remark}
\begin{document}

\title{SAT-DIFF: A Tree Diffing Framework Using SAT Solving}


\author{Chuqin Geng}
\email{chuqin.geng@mail.mcgill.ca}
\affiliation{%
  \institution{McGill University}
  \country{Canada}
}

\author{Haolin Ye}
\email{haolin.ye@mail.mcgill.ca}
\affiliation{%
  \institution{McGill University}
  \country{Canada}
}

\author{Yihan Zhang}
\email{yihan.zhang2@mail.mcgill.ca}
\affiliation{%
  \institution{McGill University}
  \country{Canada}
}

\author{Brigitte Pientka}
\email{bpientka@cs.mcgill.ca}
\affiliation{%
  \institution{McGill University}
  \country{Canada}
}

\author{Xujie Si}
\email{six@cs.toronto.edu}
\affiliation{%
  \institution{University of Toronto}
  \country{Canada}
}

\newcommand{\allen}[1]{{\color{red} [\textbf{Allen}: #1]}}
\newcommand{\xs}[1]{{\color{purple} [\textbf{Xujie}: #1]}}
\newcommand{\jl}[1]{{\color{magenta} [\textbf{Jaylene}: #1]}}
\newcommand{\haolin}[1]{{\color{brown} [\textbf{Haolin}: #1]}}

\newcommand{\maxsat}{MaxSAT}
\newcommand{\satdiff}{\textsc{SatDiff}}  
\newcommand{\gumtree}{\textit{Gumtree}}
\newcommand{\truediff}{\textit{truediff}}

\newcommand{\dcon}[3]{\texttt{dcon}(#1, e_{#2},#3)}
\newcommand{\con}[3]{\texttt{con}(#1,e_{#2},#3)}  
\newcommand{\del}[3]{\texttt{del}(#1, e_{#2},#3)}  

\newcommand{\Upd}[3]{\texttt{Upd}(#1, #2,#3)}
\newcommand{\Mov}[5]{\texttt{Mov}(#1, #2,e_{#3},#4,e_{#5})}
\newcommand{\Ins}[3]
{\texttt{Ins}(#1,e_{#2},#3)}  
\newcommand{\Del}[3]{\texttt{Del}(#1, e_{#2},#3)}  
\newcommand{\Swp}[6]{\texttt{Swp}(#1, #2, e_{#3},#4,#5,e_{#6})}

\newcommand{\fe}[2]{(#1,e_{#2})}
\newcommand{\halfe}[2]{(#1,e_{#2})}

\newcommand{\evar}[3]{x_{{#1,e_{#2},#3}}}
\newcommand{\negevar}[3]{\neg x_{#1,e_{#2},#3}}

\newcommand{\mvar}[2]{x_{#1 \leftrightarrow #2}}
\newcommand{\negmvar}[2]{\neg x_{#1 \leftrightarrow #2}}
\newcommand{\nd}[1]{${#1}$}





\begin{abstract}

Computing differences between tree-structured data is a critical but challenging problem in software analysis. In this paper, we propose a novel tree diffing approach called {\satdiff}, which reformulates the structural diffing problem into a {\maxsat} problem. By encoding the necessary transformations from the source tree to the target tree, {\satdiff} generates correct, minimal, and type safe low-level edit scripts with formal guarantees. We then synthesize concise high-level edit scripts by effectively merging low-level edits in the appropriate topological order. Our empirical results demonstrate that {\satdiff} outperforms existing heuristic-based approaches by a significant margin in terms of conciseness while maintaining a reasonable runtime.



\end{abstract}

\begin{CCSXML}
<ccs2012>
 <concept>
  <concept_id>10010520.10010553.10010562</concept_id>
  <concept_desc>Computer systems organization~Embedded systems</concept_desc>
  <concept_significance>500</concept_significance>
 </concept>
 <concept>
  <concept_id>10010520.10010575.10010755</concept_id>
  <concept_desc>Computer systems organization~Redundancy</concept_desc>
  <concept_significance>300</concept_significance>
 </concept>
 <concept>
  <concept_id>10010520.10010553.10010554</concept_id>
  <concept_desc>Computer systems organization~Robotics</concept_desc>
  <concept_significance>100</concept_significance>
 </concept>
 <concept>
  <concept_id>10003033.10003083.10003095</concept_id>
  <concept_desc>Networks~Network reliability</concept_desc>
  <concept_significance>100</concept_significance>
 </concept>
</ccs2012>
\end{CCSXML}




\received{20 February 2007}
\received[revised]{12 March 2009}
\received[accepted]{5 June 2009}

   \maketitle


\section{Introduction}
\label{sec:intro}

Software systems need to undergo evolution to fulfill their intended objectives, thereby contributing to the emergence of the field known as \emph{software evolution analysis} \cite{SoftwareEvolution}.
One of the most fundamental problems in software evolution analysis is \emph{source code differencing}, which has wide applications in version control systems, program analysis, and debugging. The problem entails computing differences between two source code files and determining the transformation from one to the other through a series of edit actions, commonly referred to as \textit{edit scripts}.  In typical software development environments, edit scripts are computed between different versions of the same code to understand changes applied to the original code.

Viewing code as text is a straightforward approach for computing edit scripts. For instance, when given two versions of a source code file, the Unix {\fontfamily{cmss}\selectfont diff} tool utilizes the Myers algorithm \cite{DBLP:journals/algorithmica/Meyers86} at the text level of granularity to generate an edit script that identifies added or deleted lines. However, edit scripts generated by tools like Unix {\fontfamily{cmss}\selectfont diff} and other text-level methods can often be challenging for users to interpret. These methods do not take the structure of the code into account, thereby necessitating further analysis in order to understand behind code changes.

To address this issue, a line of work \cite{ChangeDistilling,gumtree,JSync,XDiff,MTDIFF} has emerged that computes edit scripts at the abstract syntax tree (AST) level. This is also known as the \emph{tree/structural diffing} problem. By incorporating the structural information of the code, tree diffing edit scripts support high-level edit actions such as \textit{update}, \textit{move}, \textit{insert}, and \textit{delete},
which can significantly enhance readability for end users. One widely adopted design choice for computing tree diffing is the algorithm proposed by \cite{ChawatheRGW96}. This method involves addressing two sub-problems: 1) identifying a suitable matching between the source and target trees, and 2) deducing an edit script based on the established matching. It is worth noting that finding the shortest or minimum tree diffing edit script has been proven to be NP-hard \cite{DBLP:journals/tcs/Bille05}, and most existing methods, such as \textit{Gumtree} \cite{gumtree}, follow the approach of Chawathe et al., often failing to produce minimum edit scripts.



In this paper, our objective is to compute minimum edit scripts between two code trees while providing a formal guarantee of minimality. Our approach, called {\satdiff}, can be broken down into three distinct phases. First, in contrast to previous work, we reframe the tree diffing problem as a maximum satisfiability ({\maxsat}) problem \cite{stutzle2001review} and leverage state-of-the-art solvers to search for the correct minimum edits. Our intuition behind {\satdiff} is rather straightforward: we employ a solver designed for NP (nondeterministic polynomial)  problems to tackle the matching problem's intrinsic NP complexity. While solving NP problems such as tree diffing can raise runtime concerns, recent advancements in solvers, along with our optimizations of the constraints encoding, have made it feasible to obtain results in a timely manner. 

Second, we decode the solver's solution into low-level edit actions, which primarily specify operations on the nodes and edges of trees. We also show that these low-level edit actions can form low-level edit scripts when arranged in an appropriate order. Furthermore, our encoding constraints provide correctness and minimality guarantees for low-level edit scripts. It is also worth highlighting that edit scripts generated by $\satdiff$ can achieve the same level of type safety as proposed in \textit{truediff} \cite{truediff}. Specifically, all intermediate trees resulting from each edit action remain well-typed, even if they may contain holes.


In the final step, we synthesize high-level edit scripts from low-level edit actions, presenting them in a more intuitive and user-friendly format for end users. Specifically, our low-level edit actions can be combined to create high-level edit actions, such as \textit{update}, \textit{move}, \textit{insert}, and \textit{delete}. This process is accomplished using an efficient algorithm that combines low-level edit actions into coherent high-level edit actions while considering the appropriate topological order. Our experimental results also indicate that our high-level edit scripts outperform existing approaches such as \textit{truediff} and \textit{Gumtree} in terms of conciseness while maintaining a reasonable runtime. In summary, we make the following contributions:
\begin{itemize}
    \item We present a novel approach --- {\satdiff} for tree diffing by formulating the problem as a {\maxsat} problem. We encode tree edits into {\maxsat} constraints, where hard constraints ensure the correctness of edits, and soft constraints guarantee the minimality of the edits.  
    \item We interpret {\maxsat} solutions into low-level edit actions and subsequently synthesize high-level edit scripts based on the dependency graph of low-level edit actions.    
    \item We present rigorous semantics of our low-level and high-level edit scripts, which transform from the source tree to the target tree in a type-safe manner.

    \item We evaluate the conciseness and runtime performance of {\satdiff}, comparing it to state-of-art approaches such as \textit{Gumtree} and \textit{truediff}. Additionally, we conduct an ablation study to demonstrate the effectiveness of our encoding constraints and a case study to understand discrepancies between {\satdiff} and \textit{Gumtree}.

\end{itemize}

\vspace{-10pt}

\section{Preliminary -- The {\maxsat} problem}

\label{sec:back}

In order to discuss the maximum satisfiability (\maxsat) problem, we must first introduce the Boolean satisfiability (SAT) problem. The SAT problem aims to determine whether a given propositional formula is satisfiable and the problem is known to be NP-complete \cite{DBLP:conf/stoc/Cook71}. It can be presented in a standard form called conjunctive normal form (CNF), where it is expressed as a conjunction of clauses, each consisting of a disjunction of literals.

To be more precise, suppose we have a set of $n$ propositional Boolean variables for 
a CNF $\Phi$, denoted as 
$ vars(\Phi) =  \{x_1,x_2,...,x_n\}$. A truth assignment $\pi$ for $\Phi$ assigns each propositional variable  $ x_i \in vars(\Phi)$ to a truth value
(true or false). We say that a \emph{literal} $l$ is either a variable $x$  or its negation $\neg x$.  For a literal $l$, $\pi$ makes $l$ true ($\pi \models l$) if $l$ is the variable $x$ and $\pi(x) = true$ or if $l$ is $\neg x$ and $\pi(x) = false$. A \emph{clause} $C$ that consists of $n$ literals can be expressed as
\[
C = \bigvee_{i=1}^{n} l_{i} 
\]

We say a clause $C$ is satisfied by $\pi$ if $\pi$ makes at least one of the literals $l_i$ in $C$ true, denoted as $\pi \models C$. Then, a CNF $\Phi$ which contains $m$ clauses $\{C_1,C_2,...,C_m\}$ can be represented as
\[
\Phi = \bigwedge_{j=1}^{m}C_j = \bigwedge_{j=1}^{m}\bigvee_{i=1}^{n_j} l_{ij}
\]

It is trivial to show that $\pi$ satisfies $\Phi$ ($\pi \models \Phi$) if and only if all $\{C_1,C_2,...,C_m\}$ are satisfied by $\pi$ simultaneously. In this case, we say $\pi$ is a model of $\Phi$. The goal of solving the {\Large S}AT problem is to return a model $\pi$ if $\Phi$ can be satisfied, or a proof if $\Phi$ is not satisfiable.

In certain scenarios, it may be impossible to find an assignment $\pi$ that satisfies all clauses, thereby giving rise to the {\maxsat} problem. A {\maxsat} CNF $\Phi$ can be partitioned into hard and soft clauses: 
\[
\Phi = hard(\Phi) \bigwedge soft(\Phi)
\]

where hard clauses $hard(\Phi)$ must be satisfied, while soft clauses $soft(\Phi)$ can be falsified at a certain cost. More specifically, the cost of a soft clause $C$ is a positive weight denoted as $wt_C$. 


A \emph{feasible} solution of $\Phi$ is a truth assignment $\pi$ to $vars(\Phi)$( the variables of $\Phi$), that satisfies the hard clauses $hard(\Phi)$. The cost of a feasible solution $\pi$ for $\Phi$ is determined by the sum of the weights of the soft clauses it falsifies.
\[
cost(\pi, \Phi) = \sum_{C|C \in soft(\Phi) \wedge  \pi \not \models C} wt_C
\]
The goal of solving a {\maxsat} problem $\Phi$ is to minimize the total cost across all soft clauses. An optimal solution can be defined as a feasible solution that satisfies the minimum sum of the weights of the soft clauses. Solving the {\maxsat} problem is also NP-hard, and finding an exact solution for large instances is often impractical. Heuristic algorithms are typically used to find approximate solutions that are close to optimal while balancing the trade-off between satisfying hard constraints and minimizing the penalties for violating soft constraints \cite{stutzle2001review}.




\section{Overview of {\satdiff}}
\label{sec:overview}

In this section, we provide an overview of our {\satdiff} framework, a novel approach to tree diffing using the {\maxsat} solving. The {\satdiff} framework takes a pair of source and target code files written in a specific programming language as input and generates a correct and concise edit script. The {\satdiff} framework consists of the following phases:

\textbf{The parsing phase} This phase involves taking a pair of source and target code files written in a specific programming language and parsing them into the source AST $S$ and the target AST $T$. To give a concrete example, let's consider the following pair of codes:
\begin{align*}
&Source:
Sub( Add( A, (Mul( B, C) ),  Sub( D, Sub( E, F) ) ) ) \\
&Target:
Sub( Add( G, Add( B, C) ), Sub( D, A)) 
\end{align*}
which are parsed into a pair of source and target ASTs, as illustrated in Figure \ref{fig:overview_sat1}. Formally, we define an AST as a connected acyclic graph represented by $T:=(N,E)$, where $N,E$ represents a set of nodes and edges respectively. Each node $n \in N$ has a parent node represented by $n.par$, except for the root node, which is denoted as $root(T)$. 
Additionally, any given node $n$ can have an ordered list of child nodes connected through edges. We represent the $i$\textsuperscript{th} child of node $n$ as $n.c_i$, and we denote the connecting edge as $(n,e_i, n.c_i)$. We also denote the $i$\textsuperscript{th} \textit{child slot} or the edge $e_i$ of a (parent) node $n$ as $\halfe{n}{i}$. In addition, each node $n$ possesses both a label ($l:=label(n)$) and an optional string value ($v:=value(n)$). Normally, internal nodes of an AST have a label, whereas a leaf node can have both a label and a value. Note that each node
$n \in N$ is assigned with a unique identifier even if there are nodes that have the same label and value.

We also introduce the concepts of \emph{virtual nodes} and \emph{source space}, both of which play an essential role in constructing an isomorphic tree to the target tree. \emph{Virtual nodes} refer to nodes in the target tree $T$ that do not exist in the source tree $S$, such as nodes 21 and 22 in Figure \ref{fig:overview_sat1}. The \emph{source space} includes the source tree $S$ and all \emph{virtual nodes}. In essence, the target tree $T$ can be reconstructed within the \emph{source space} by removing existing edges and adding new ones.

\begin{figure}
    \centering
    \subfloat[Parsing phase]{
    \centering
            \includegraphics[page=2, width=0.3\textwidth,trim={0cm 0.5cm 23cm 0cm}, clip]{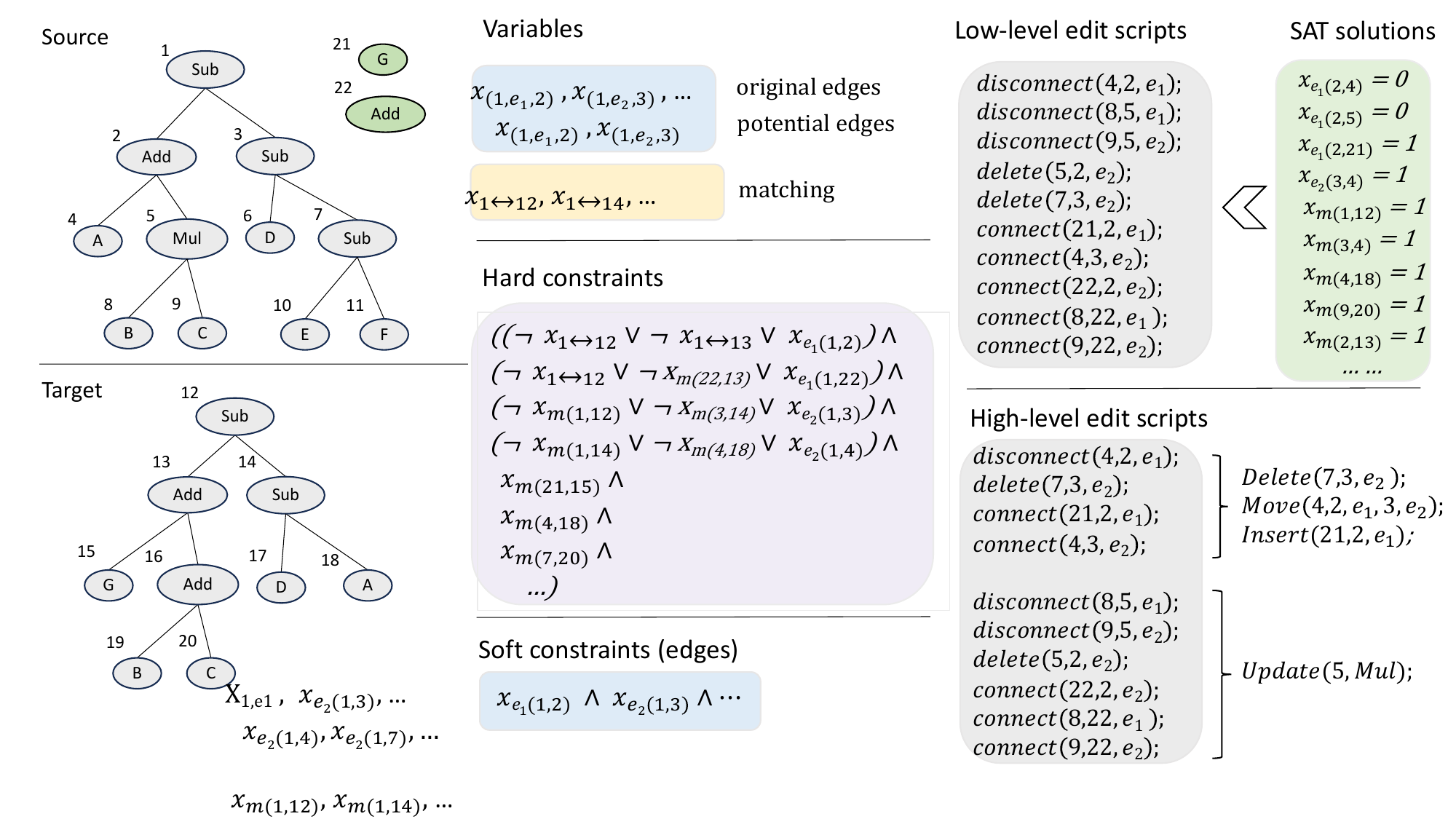}
            \label{fig:overview_sat1}
        }
    \subfloat[Encoding phase]{
    \centering
            \includegraphics[page=2, width=0.31\textwidth,trim={11cm 0.5cm 11.8cm 0.5cm}, clip]{figures/overview.pdf}
            \label{fig:overview_sat2}
        }
    \subfloat[Decoding and synthesizing phase]{
    \centering
            \includegraphics[page=2, width=0.33\textwidth, trim={22.2cm 0.5cm 0cm 0.5cm}, clip]{figures/overview.pdf}
            \label{fig:overview_sat3}
        }   
    \caption{Overview of the {\satdiff} Algorithm.}
    \label{fig:overview_sat}
    \vspace{-10pt}
\end{figure}

\textbf{The encoding phase} In this phase, we start by generating the necessary propositional Boolean variables, which are then organized into two sets. The first set is designed to encode the state of edges in the \emph{source space}. Formally, we use a variable $\evar{n}{i}{n_c}$ to indicate whether the $i$\textsuperscript{th} edge $e_i$ of node \nd{n} is connected to \nd{n_c}. For instance, $\evar{2}{1}{4}$ encodes the usage of the first edge of the node \nd{2} to \nd{4} in our example. The literal $\evar{n}{i}{n_c}$ indicates that the edge $e_i$ is being utilized, signifying a connection between the corresponding nodes \nd{n} and \nd{n_c}. And the negation $\negevar{n}{i}{n_c}$ indicates that the edge $e_i$ is not being used. It's worth noting that, in addition to existing edges, potential edges (currently nonexistent) are also being encoded.

The second set of variables is used to encode the matching between nodes in $T$ and $S$. Specifically, a variable $\mvar{n}{n'}$ encodes the usage of a match between a node $n$ from the source tree $S$ and a node $n'$ from the target tree $T$. For instance, $\mvar{1}{12}$ encodes the usage of a match between nodes $1$ and $12$. The literal $\mvar{n}{n'}$ indicates that the match between $n$ and $n'$ is established, while the negation $\negmvar{n}{n'}$ indicates that there is no match between $n$ and $n'$. 


The next step involves encoding a set of hard and soft constraints to tackle the tree diffing problem. The hard constraints are designed to guarantee that the tree reconstructed by removing existing edges and adding new edges within the source space is isomorphic to the target tree $T$. Conversely, these soft
constraints specify that both disconnecting existing edges and adding new edges should incur a certain cost.
Intuitively, these soft constraints incentivize the solver to make minimal changes to the structure of the source tree, ultimately contributing to the optimality of the derived edit script. Further details regarding the design of these propositional Boolean variables and constraints will be discussed in Section \ref{sec:phase1}.

\begin{figure}[t]
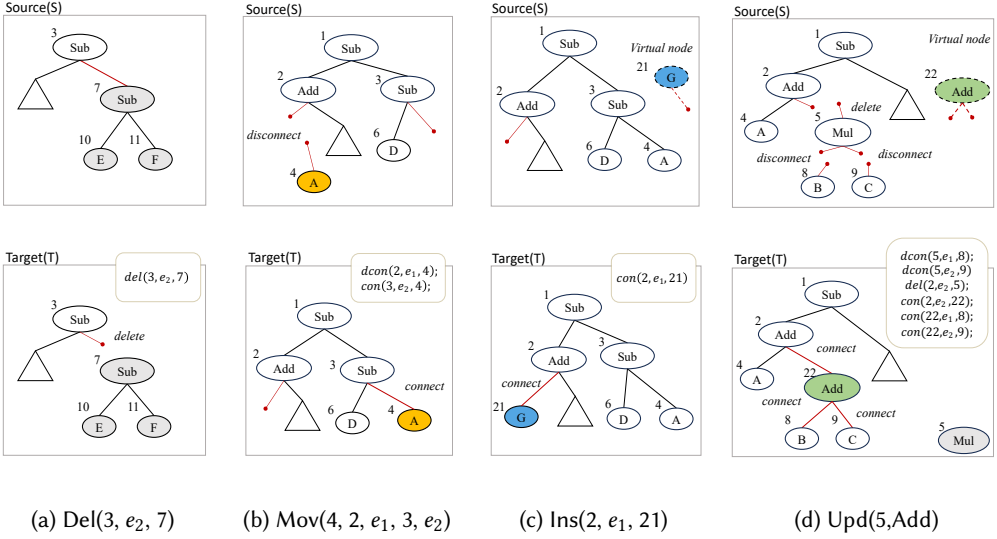

    \centering
    \subfloat[Del(3, $e_2$, 7)]{
    \centering
            \includegraphics[page=3, width=0.227\textwidth,trim={0cm 2cm 25.9cm 0cm}, clip]{figures/overview.pdf}
            \label{fig:high_level_ops1}
        }
    \subfloat[Mov(4, 2, $e_1$, 3, $e_2$)]{
            \centering
            \includegraphics[page=3, width=0.215\textwidth,trim={8.3cm 2cm 18cm 0cm}, clip]{figures/overview.pdf}
            \label{fig:high_level_ops2}
        }
    \subfloat[Ins(2, $e_1$, 21)]{
            \includegraphics[page=3, width=0.232\textwidth, trim={16.2cm 2cm 9.5cm 0cm}, clip]{figures/overview.pdf}
            \label{fig:high_level_ops3}
        }   
    \subfloat[Upd(5,Add)]{
        \includegraphics[page=3, width=0.265\textwidth, trim={24.5cm 2cm 0cm 0cm}, clip]{figures/overview.pdf}
        \label{fig:high_level_ops4}
        }   
    \caption{High-level Operations.}
    \label{fig:high_level_ops}
    \vspace{-15pt}
\end{figure}

\textbf{The decoding phase} In this phase, our objective is to decode low-level edit actions from SAT solutions. We design three low-level edit actions: \texttt{dcon}, \texttt{del}, and \texttt{con}, representing disconnecting an edge, deleting a node, and connecting an edge, respectively. These actions can be easily inferred from the truth assignment (SAT solution) of our propositional Boolean variables, which encode the state of edges and the usage of node matches.

We demonstrate that low-level edit actions can form a low-level edit script when arranged appropriately. $\satdiff$ provides correctness and minimality guarantees on low-level edit scripts by enforcing hard and soft constraints, respectively. Further details will be discussed in Section \ref{sec:phase2}.

\textbf{The synthesizing phase} In this phase, we synthesize high-level edit scripts from low-level edit actions. It's worth mentioning that high-level edit actions can be composed using specific combinations of low-level edit actions, given an appropriate dependency order, as illustrated in Figure \ref{fig:high_level_ops}. For instance, the process of disconnecting the edge $(2, e_1, 4)$ and then reconnecting node $4$ to node $3$ via a new edge $(3, e_2, 4)$ can be synthesized into a \textit{Move} action. Afterward, connecting the \emph{vrtiual node} $21$ to the slot $(2, e_1)$ of node $2$ can be viewed as an \textit{Insert} action. Additionally, the process of deleting node $5$ while disconnecting nodes $8$ and $9$ from it, which creates an empty slot at the second child of node 2, followed by reconnecting nodes $8$ and $9$ to the \emph{vrtiual node} $22$ and then connecting this subtree to that slot can be merged into a \textit{Update} action.

Specifying an appropriate ordering for low-level edits, as well as synthesizing high-level edit scripts, poses challenges due to complex dependencies. To address this, we leverage dependency graphs that characterize topological orderings of low-level edit actions to generate both low-level and high-level edit scripts. Further details about our approach and other aspects of our high-level edit scripts will be discussed in Section \ref{sec:phase3}.

\begin{algorithm2e}[H]
    \caption{The {\satdiff} framework}
    \label{alg:sat-diff}
    \small
    \DontPrintSemicolon
    \SetKwProg{Fn}{Function}{}{end}
    \KwIn{source tree\ ($S$),\ target tree\ ($T$)}
    \KwOut{a high-level edit script between the input trees, $\overline{\Delta}$}
    \SetKwFunction{sat}{SATDiff}
    \Fn{\sat{$S$, $T$}}{
    $v \gets CollectVirtualNodes(S,T)$\tcc*{can be improved with optimization1}
    $M \gets MatchingVariables(S\cup v,T)$\;
    $E \gets EdgeVariables(S\cup v,T)$ \tcc*{can be improved with optimization2}
    $\Phi_{hard} \gets HardEncoding(M,E)$\;
    $\Phi_{soft} \gets SoftEncoding(M,E)$\;
    $\pi$ $\gets$ $\texttt{MAX\_SAT\_Solver}(\Phi_{hard},\Phi_{soft})$\tcc*{SAT solution}
    $\mathcal{E} \gets Decoding(\pi)$ \tcc*{edit effects}
    $dpg \gets ConstructDependencyGraph(\mathcal{E})$\;
    $\overline{\Delta} \gets SynthesizingHighLevelEditScript(dpg)$\;
    \Return{$\overline{\Delta}$}
    }
\end{algorithm2e}

\section{Methodology}
\label{sec:method}

In this section, we discuss the detailed steps of each phase.  Algorithm \ref{alg:sat-diff} outlines the high-level procedure of the {\satdiff} framework. {\satdiff} starts by generating variables to encode the tree diffing problem as a {\maxsat} problem. Subsequently, it utilizes a {\maxsat} problem solver to identify the optimal solution. The optimal solution is then interpreted as tree edits, and, in the final step, high-level edit scripts are synthesized based on the topological ordering of tree edits. 

\subsection{The encoding phase --- Generating variables and constraints to encode tree diffing }
\label{sec:phase1}




\subsubsection{Allocating variables to encode the usage of edges and matches}\leavevmode\newline
To encode the tree diffing problem, we begin by generating a set of propositional Boolean variables that represent modifications to the source tree. We also introduce key definitions, including \emph{virtual nodes}, \emph{source space}, and \emph{edge variables}.





\begin{definition}[\emph{Virtual nodes and the source space}] Consider a pair of source tree $S:=(N_S,E_S)$ and target tree $T:=(N_T,E_T)$, where $N_S$ and $N_T$ represent multisets of nodes, and $E_S$ and $E_T$ represent multisets of edges. We define \emph{virtual nodes} as copies of all nodes in $N_T$. Then we define \emph{source space} as the graph $S':=(N_S \cup N_T, E_S)$, i.e.,  constructed by $S$ and \emph{virtual nodes}.
\end{definition}

\begin{remark}[\emph{Optimization on virtual nodes}]
Intuitively, we want to construct the target tree $T$ (or construct a tree that is identical to $T$) in the \emph{source space} $S'$. Hence, $S'$ should include all nodes of $T$, i.e., $N_T$. Then by manipulating the edges between nodes in $S'$, it becomes possible to find a tree within $S'$ that is isomorphic to $T$, as stated in Lemma \ref{lemma:iso_exista}.

However, it is sufficient for \emph{virtual nodes} to contain only those nodes that are in the multiset set difference $N_T \setminus N_S$. For instance, suppose $N_T = \{3,Add,Add,Add\}$ and $N_S= \{3,Add,4\}$; then, \emph{virtual nodes} are $N_T \setminus N_S = \{Add,Add\}$. It is trivial to show that adding ${\text{Add}, \text{Add}}$ to $N_S$ provides enough elements to construct the target tree $T$. 

Reducing the number of \emph{virtual nodes} significantly decreases the number of clauses in our encoding, playing an essential role in improving the runtime performance of {\satdiff}, as demonstrated in Section \ref{sec:opt}. 


\end{remark}



\begin{definition}[\emph{Edit effect}] 


An edit effect $\mathcal{E}(S,T)$ is a four-element tuple $(\mathcal{E}_{N+}(S,T),  \mathcal{E}_{N-}(S,T), \\ \mathcal{E}_{E+}(S,T), \mathcal{E}_{E-}(S,T))$ representing the necessary edits for the source tree $S$ to transform into the target tree $T$. To be more specific, $\mathcal{E}_{N+}, \mathcal{E}_{N-}, \mathcal{E}_{E+}, \mathcal{E}_{E-}$ denote the nodes and edges that need to be added and deleted, respectively. We intentionally omit $S$ and $T$ when the context is clear.



    
\end{definition} 
\begin{remark}

Note that the edit effect $\mathcal{E}$ solely describes the outcome of the edit, without specifying a sequence of edit actions. Still, it serves as a guideline to generate our edit scripts. Intuitively, we view an edit script as a sequence of actual edit actions designed to achieve the editing effect described by the corresponding edit effect, as we will discuss later. In addition, the creation of \emph{source space} $S'$ corresponds to $\mathcal{E}_{N+}(S,T)$. Then we only need an edit effect $\mathcal{E}(S',T)$ that consists of three-element tuple $(\mathcal{E}_{N-}(S',T),\mathcal{E}_{E+}(S',T),\mathcal{E}_{E-}(S',T))$ to describe necessary edits that applied to the source space $S'$ to transform into $T$.

\end{remark}

\begin{lemma}
\label{lemma:iso_exista}
There exists at least one edit effect $\mathcal{E}$ that applies to the source space $S':=(N_{S'}, E_S)$, resulting in a tree $T_\mathcal{E}$ such that $T_\mathcal{E}$ is isomorphic to the target tree $T$. 
\end{lemma}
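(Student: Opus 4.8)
The plan is to prove the lemma by exhibiting an explicit witness edit effect $\mathcal{E}$ and checking directly that the graph it produces is a tree isomorphic to $T$. Since the statement is purely existential, there is no need to search for a minimal or canonical edit effect; the crudest construction --- rebuild $T$ wholesale on the virtual nodes and discard the entire original source --- already suffices, and it is also the cleanest to verify.

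First I would fix notation for the virtual copies. By the definition of the source space, $N_{S'} = N_S \cup N_T$, where the second component is a fresh copy of $N_T$; write $\tilde n$ for the virtual copy of a target node $n \in N_T$, chosen so that $label(\tilde n) = label(n)$ and $value(\tilde n) = value(n)$. The key observation is that in $S'$ the only edges are those of $E_S$, so every virtual node $\tilde n$ is isolated.

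Then I would define the witness triple $(\mathcal{E}_{N-}, \mathcal{E}_{E+}, \mathcal{E}_{E-})$ componentwise: take $\mathcal{E}_{E-} := E_S$ (disconnect every source edge), $\mathcal{E}_{E+} := \{(\tilde n, e_i, \tilde{n'}) : (n, e_i, n') \in E_T\}$ (lay down a copy of each target edge between the corresponding virtual nodes, preserving the child slot $e_i$), and $\mathcal{E}_{N-} := N_S$ (delete all original source nodes). Applying this effect to $S'$ removes all of $E_S$, which leaves every source node isolated and hence safe to delete, and installs exactly the copied edges among the virtual nodes. The resulting graph $T_\mathcal{E}$ therefore has node set consisting of the virtual copies and an edge multiset mirroring $E_T$.

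Finally I would verify the two required properties together. Consider the map $\phi(\tilde n) := n$. It is a bijection from the nodes of $T_\mathcal{E}$ onto $N_T$ by construction, and it carries $\mathcal{E}_{E+}$ onto $E_T$ edge-for-edge while respecting both endpoints and the ordered child slot $e_i$; hence connectivity and acyclicity transfer from $T$, so $T_\mathcal{E}$ is indeed a tree. Because $\phi$ additionally preserves $label$ and $value$ (by the choice of virtual copies), it is an isomorphism of labelled ordered trees, which is exactly the notion needed here. I expect the only point requiring care --- the closest thing to an obstacle --- is confirming that the deletion of $N_S$ is well defined after the edge removals, i.e.\ that no deleted node retains an incident edge; this is immediate once one observes that $\mathcal{E}_{E-} = E_S$ strips precisely the edges incident to source nodes, leaving them all isolated before deletion.
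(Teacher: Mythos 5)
Your proposal is correct and is essentially the paper's own proof: both exhibit the same trivial witness $\mathcal{E}_{E-} = E_S$, $\mathcal{E}_{E+} = E_T$ (laid down on the virtual copies), and $\mathcal{E}_{N-} = N_S$ (which equals the paper's multiset difference $N_{S'} \setminus N_T$), tearing down the source entirely and rebuilding $T$ wholesale on the virtual nodes. Your write-up is somewhat more careful than the paper's --- you explicitly construct the isomorphism $\phi(\tilde n) := n$, check label/value and child-slot preservation, and note that edge removal leaves the deleted nodes isolated --- but these are verifications of the same construction, not a different route.
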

\begin{proof}
A trivial edit effect $\mathcal{E}(S',T)$ can be described as follows: $\mathcal{E}_{E-}(S',T) = E_S$, $\mathcal{E}_{E+}(S',T) = E_T$, and $\mathcal{E}_{N-}(S',T) = N_{S'} \setminus N_T$. First, we delete edges in $\mathcal{E}_{E-}(S',T)$ in $S'$, resulting in a modified graph $S' := (N_{S'}, \varnothing)$. Next, we 
construct edges in $\mathcal{E}_{E+}(S',T)$, resulting in $S' := (N_{S'}, E_T)$. Finally, the remaining unconnected nodes that lies in $\mathcal{E}_{N-}(S',T)$ are deleted, resulting in $S' := (N_T, E_T)$.
\end{proof}

To encode the edit effect from the source space $S'$ to the target tree $T$, we require \emph{match variables} and \emph{edge variables}. Recall that our edge notions: We denote the $i$\textsuperscript{th} child slot or the edge $e_i$ of a (parent) node $n$ as $(n, e_i)$. An edge can also be denoted as $(n, e_i, n_c)$, interpreted as the $i$\textsuperscript{th} edge of $n$ connecting to $n_c$. Then we define \emph{edge variables}.



\begin{definition}[\emph{The edge variable}] Let $n$ and $n_c$ be two nodes in the source space $S'$, a variable $\evar{n}{i}{n_c}$ encodes the usage of the edge $e_i$ in the connection between  node $n_c$ and (parent) node $n$. 
\end{definition}

\begin{remark}
The literal $\evar{n}{i}{n_c}$ indicates that the edge $e_i$ is being utilized, signifying a connection between the corresponding nodes $n$ and $n_c$ through edge $e_i$. In other words, node $n_c$ uses the edge $(n, e_i)$, becomes the $i$\textsuperscript{th} child of node $n$. On the other hand, the negation $\negevar{n}{i}{n_c}$ indicates that the edge $e_i$ is not being used, although nodes $n$ and $n_c$ may be disconnected or connected through other edges. 
\end{remark}

\begin{definition}[\emph{The match variable}] 
The set of nodes matched with $n$ in $T$ is defined as $M_{node}(n,N_T) := \{ n' \mid n' \leftrightarrow n, n' \in N_T\}$, where $n \leftrightarrow n'$ indicates that the following match conditions are met:
\begin{align*}
    &label(n) = label(n'), value(n) = value(n')
\end{align*}
We also define the set of match variables with $n$ in $T$ as $M_{var}(n,N_T) := \{\mvar{n'}{n} \mid n' \in M_{node}(n,N_T) \}$. Conversely,
the set of nodes matched with  $n'$ in $S'$ is denoted as  $M_{node}(n',N_{S'}) := \{ n \mid n \leftrightarrow n', n \in N_{S'}\}$, and and the set of match variables with $n'$ in $T$ as $M_{var}(n',N_{S'}) := \{\mvar{n}{n'} \mid n' \in M_{node}(n',N_{S'}) \}$.

\end{definition}

\begin{remark}
The literal $\mvar{n}{n'}$ 
represents a match between
$n$ and $n'$, while the negation $\negmvar{n}{n'}$ signifies no match. 
Multiple match variables can be created for a node $n$ when it has several potential matching nodes in $T$, but only one will be used and determined by the \maxsat{} solver.
\end{remark}




\begin{definition}[\emph{Potential edges}]
Let $n'$ be a node from the target tree $T$, and 
$(n',e_i) \in E_T$ be the $i^{th}$ edge of $n'$. Then we define the potential edges matched with $(n',e_i)$ as all possible edges excluding existing edges in the source space in $S'$, i.e., $(N_{S'} \times N_{S'}) \setminus E_S$, denoted as $\widetilde{E}((n',e_i))$. The set of potential edges matched with every edge in $E_T$ is called the potential edges (set), denoted as ${\bigcup_{(n',e_i) \in E_T}  \widetilde{E}((n',e_i))}$.
\end{definition}



\begin{remark}
\label{rk:opt_on_edges}[\emph{Optimization on potential edges}]
\emph{Potential edges} refer to the collection of edges that are not present in the source space $S'$ but may be needed for constructing an isomorphic tree to $T$ in $S'$. It is clear that we only need a small subset of $(N_{S'} \times N_{S'}) \setminus E_S$ to search for potential edges matched with $(n',e_i)$. To prevent an explosion in the number of all possible edges, we filter them by leveraging the \emph{label} and \emph{value} information of the target tree. This information, extracted from the edges of the target tree, provides strict rules for selecting child nodes for a specific parent node. Formally, we could define the filtered \emph{potential edges} matched with $(n',e_i)$ as follows:
\begin{align*}
 \widetilde{E}((n',e_i)): = \{  &(n,e_i,n_c) \mid n \leftrightarrow n', n_c \leftrightarrow n'.c_i, \forall \text{} n,n_c \in N_{S'} \}
\end{align*}

Such a reduction in the number of \emph{potential edges} significantly mitigates the clause explosion in the encoding phase, thus improving the runtime performance of {\satdiff} dramatically, as demonstrated in Section \ref{sec:opt}. 

\end{remark}

\subsubsection{Encoding  the tree diffing problem using hard and soft constraints} \leavevmode\newline
We show the tree diffing problem can be encoded into a max satisfiability problem represented by a CNF $\Phi_{{\satdiff}}$. To facilitate our discussion on $\Phi_{{\satdiff}}$, we introduce the following cardinality constraints \cite{Biere2009}:


\begin{definition}
\label{lemma:at_most_one}
[\emph{The at most one (AMO) constraint}]: Given a collection of variables $ X = \{x_1, \ldots, x_n\}$, for $1 \leq i \leq n$, the SAT encoding of the AMO constraint
is the following:

\[
\Phi_{\leq1}(X) = \neg s_0 \bigwedge \left(\bigwedge_{i=1}^{n} \left(\left(\neg s_{i-1} \lor s_{i}\right) \land \left(\neg s_{i-1} \lor \neg x_i\right) \land \left(s_{i-1} \lor \neg x_i \lor s_{i}\right)\right)\right) 
\]



\end{definition}

\begin{definition}
\label{lemma:exactly_one}
[\emph{The exactly one (EO) constraint}]: 
Given a collection of variables $ X = \{x_1, \ldots, x_n\}$, for $1 \leq i \leq n$, the SAT encoding of the EO constraint is the following:
\[
\Phi_{=1}(X) = \left(\bigvee_{i=1}^{n} l_i\right) \land \Phi_{\leq1}(X)
\]

\end{definition}

Note the variables $X$ for $\Phi_{{\satdiff}}(X)$ consist of the edge variables and match variables introduced earlier, denoted as $X:=\{\mvar{n}{n'}|n \in N_{S'}, n' \in N_T \} \cup \{\evar{n}{i}{n_c}|(n,e_i,n_c) \in \{\bigcup\limits_{(n',e_i) \in E_T}  \widetilde{E} ((n',e_i)) \bigcup E_S\}\}$.
\begin{align*}
    \Phi_{{\satdiff}} &= hard(\Phi_{{\satdiff}}) \bigwedge soft(\Phi_{{\satdiff}}) \\
    hard(\Phi_{{\satdiff}}) &= \Phi_{SN} \land \Phi_{TN}\land \Phi_{CN} \land \Phi_{PN} \land \Phi_{RN} \land \Phi_{\cong} \\
    soft(\Phi_{{\satdiff}}) &= \Phi_{DE} \land \Phi_{AE}
\end{align*} 

where

\begin{enumerate}
  \item $\Phi_{SN}$: Each node $n$ in the source space has at most one target node match. \\
  \[  \Phi_{SN} = \bigwedge_{n \in N_{S'} } \Phi_{\leq1}(M_{var}(n,N_T))\]

where $N_{S'}$ is the set of nodes of the source space $S'$, $M_{var}(n,N_T)$ represents the set of  variables in $N_T$ that match the source node $n$.\\
  \item $\Phi_{TN}$: Each target node $n'$ has exactly one match node in the source space.  \\
  \[  \Phi_{TN} = \bigwedge_{n' \in N_T}  \Phi_{=1}(M_{var}(n',N_{S'}))\] 

where $M_{var}(n',N_{S'})$ represents the set of variables in  matched with target node $n'$. \\

  
  \item \textit{$\Phi_{CN}$: Each child node $n_c$ in the source space has at most one parent node.} \\
  \[  \Phi_{CN} = \bigwedge_{n_c \in N_{S'}}  \Phi_{\leq1}(\{\evar{n}{i}{n_c}|(n,e_i,n_c) \in \{\bigcup_{(n',e_i) \in E_T}  \widetilde{E} ((n',e_i)) \bigcup E_S\}\})\]

where ${\bigcup\limits_{(n',e_i) \in E_T}  \widetilde{E} ((n',e_i))}$ is the set of potential edges in the source space $S'$. \\


  \item $\Phi_{PN}$: Each parent node $n$ in source space have at most one $i$\textsuperscript{th} child node. \\
  \[  \Phi_{PN} = \bigwedge_{n \in N_{S'}}  \Phi_{\leq1} (\evar{n}{i}{n_c}| (n,e_i,n_c) \in \{\bigcup_{(n',e_i) \in E_T}  \widetilde{E} ((n',e_i)) \bigcup E_S\}  \})\]     



  \item $\Phi_{\cong}$: [\emph{The isomorphism constraint}] When both the parent node $n'$ and the child node $n_c'$ in the target tree have matches, their matched nodes $n$ and $n_c$ in the source space $S'$ must be connected by the corresponding edge $e_i$.

    \[  \Phi_{\cong} = \bigwedge_{ (n',e_i,n'_c) \in E_T}  \{\evar{n}{i}{n_c}|n \in M_{node}(n',N_{S'}), n_c \in M_{node}(n'_c,N_{S'})\}\]

     
where $M_{node}(n',N_{S'})$ and $M_{node}(n'_c,N_{S'})$ are sets of nodes in the source space matched with $n'$ and $n'_c$, respectively.\\

\item $\Phi_{RN}$: If any node matches with the root of the target tree $T$, then the node has no parent. \\
    \[  \Phi_{RN} = \bigwedge_{n_c \in M_{node}(root(T),N_{S'})}  \{\negevar{n}{i}{n_c}|(n,e_i,n_c) \in \{\bigcup\limits_{(n',e_i) \in E_T}  \widetilde{E} ((n',e_i)) \bigcup E_S\}\}\] 

where $M_{node}(root(S'), N_{S'})$ is the set of nodes in the source space $S'$ that match with the root of $T$.\\


    \item $\Phi_{DE}$: Encourage the utilization of exisiting edges.
     \[  \Phi_{DE} = \bigwedge_{(n,e_i,n_c) \in E_S} \evar{n}{i}{n_c} \] 
    
    \item $\Phi_{AE}$: Discourage the utilization of potential edges. 
      
     \[  \Phi_{AE} = \bigwedge_{(n,e_i,n_c) \in \bigcup\limits_{(n',e_i) \in E_T}  \widetilde{E}((n',e_i))}  \negevar{n}{i}{n_c}\]   


\end{enumerate}


\paragraph{Interpret the solution to $\Phi_{{\satdiff}}$ as edit effect.} Recall our encoding variables are $X:=\{\mvar{n}{n'}\} \cup \{\evar{n}{i}{n_c}\}$. Then variables assigned by a solution $\pi$ can be interpreted as an edit effect $\mathcal{E}^{\pi}:=(\mathcal{E}_{N-}^{\pi},\mathcal{E}_{E+}^{\pi},\mathcal{E}_{E-}^{\pi})$ that applies to the source space $S':=(N_{S'} ,E_S)$:
\begin{align*}
\mathcal{E}_{N-}^{\pi} &:=  \{n \text{ } | \text{ }\text{ } \pi(\mvar{n}{n_c}) = 0\} \\
\mathcal{E}_{E+}^{\pi} &:= 
\{(n,e_i,n_c)  \text{ } | \text{ }\text{ } (n,e_i,n_c) \in \bigcup_{(n',e_i) \in E_T}  \widetilde{E}((n',e_i)), \pi(\evar{n}{i}{n_c}) = 1\} \\
\mathcal{E}_{E-}^{\pi} &:=
\{(n,e_i,n_c) \text{ }  |  \text{ }\text{ } (n,e_i,n_c) \in E_S,\text{ } \pi(\evar{n}{i}{n_c}) = 0\} 
\end{align*}
Note that the solution $\pi$ may be infeasible, failing to satisfy the hard constraints $hard(\Phi_{{\satdiff}})$. In such cases, applying the edit $\mathcal{E}^{\pi}$ to $S'$ could result in an arbitrary graph $G:=(N_G,E_G)$, where $N_G \subseteq N_{S'}$ and 
$E_G \in \mathcal{P}(N_{S'} \times N_{S'})$. In addition, the solution $\pi$ doesn't directly specify the ordering of adding and removing nodes and edges; it is limited to indicating which edges and nodes should be modified, corresponding to an edit effect $\mathcal{E}^{\pi}$.

\begin{theorem} [\emph{Guarantee of correctness}] 
\label{thm:sol_correct}
Any solution $\pi$ that satisfies the hard constraints, i.e., $\pi \models hard(\Phi_{{\satdiff}})$, implies a correct edit effect $\mathcal{E}^{\pi}$ from the source space $S'$ to the target tree $T$. The correctness of edit effect $\mathcal{E}^{\pi}$ ensures the existence of a tree $T_\mathcal{E}$ in the source space that is isomorphic to the target tree $T$, i.e., $T_\mathcal{E} \cong T$. 
\end{theorem}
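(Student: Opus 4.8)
The plan is to exhibit an explicit graph isomorphism between $T$ and the tree $T_\mathcal{E}$ obtained by applying $\mathcal{E}^{\pi}$ to $S'$, and to use the tree identity $|E_T| = |N_T| - 1$ to pin down the surviving edge set exactly. First I would make precise what $T_\mathcal{E}$ is: applying $\mathcal{E}^{\pi}$ to $S':=(N_{S'}, E_S)$ deletes every existing edge whose variable is false and adds every potential edge whose variable is true, so the surviving edge set is precisely $\{(n,e_i,n_c) \mid \pi(\evar{n}{i}{n_c}) = 1\}$; the subsequent deletion of the unmatched nodes $\mathcal{E}_{N-}^{\pi}$ then leaves exactly the matched nodes together with those on-edges both of whose endpoints are matched.

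Next I would build the candidate isomorphism from the match assignment. By $\Phi_{TN}$ every target node $n'$ has exactly one source match, which defines a map $g \colon N_T \to N_{S'}$ sending $n'$ to that unique match; by $\Phi_{SN}$ each source node carries at most one match, so $g$ is injective, and its image is exactly the set of matched (surviving) nodes. Hence $g$ is a bijection from $N_T$ onto the node set of $T_\mathcal{E}$, and since $n' \leftrightarrow g(n')$ entails $label(g(n')) = label(n')$ and $value(g(n')) = value(n')$, the map $g$ preserves labels and values. It then remains only to check that $g$ carries $E_T$ bijectively onto the edge set of $T_\mathcal{E}$, with slot indices preserved.

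For the edges I would argue by a two-sided count. In the forward direction, $\Phi_{\cong}$ guarantees that for each target edge $(n', e_i, n_c') \in E_T$ the edge $(g(n'), e_i, g(n_c'))$ is on; distinct target edges yield distinct on-edges because $g$ is injective on both endpoints and the slot index is carried along, so $T_\mathcal{E}$ has at least $|E_T| = |N_T| - 1$ edges among matched nodes. For the reverse bound, $\Phi_{RN}$ forces the match $g(root(T))$ of the root to have no incoming edge, while $\Phi_{CN}$ allows every other matched node at most one incoming edge (and $\Phi_{PN}$ ensures each occupied slot holds a single child); thus there are at most $|N_T| - 1$ on-edges among matched nodes. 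The two bounds coincide, so the surviving edges are exactly the $|N_T|-1$ edges forced by $\Phi_{\cong}$, giving a bijection $E_T \to E(T_\mathcal{E})$ under which $(n',e_i,n_c')$ corresponds to $(g(n'),e_i,g(n_c'))$. Combined with label and value preservation, this makes $g$ a graph isomorphism, whence $T_\mathcal{E} \cong T$; in particular $T_\mathcal{E}$ inherits connectedness and acyclicity from $T$ and is genuinely a tree.

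The step I expect to be the main obstacle is the reverse edge bound and its meshing with the forward count: the hard constraints never directly assert the absence of extra edges, so spurious on-edges among matched nodes can be ruled out only by the cardinality argument, which in turn hinges on reading $|E_T| = |N_T|-1$ off the tree structure of $T$ and on confirming via $\Phi_{RN}$ that no on-edge can terminate at the root match. A secondary point requiring care is that on-edges incident to an unmatched node are silently discarded when $\mathcal{E}_{N-}^{\pi}$ is deleted, so they must be excluded from the node set and the edge count before the comparison with $|E_T|$ is made.
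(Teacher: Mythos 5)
Your proof is correct, and it shares the paper's overall skeleton---use $\Phi_{TN}$ and $\Phi_{SN}$ to extract a label- and value-preserving bijection $g$ between $N_T$ and the surviving (matched) nodes, then use $\Phi_{\cong}$ to transport edges---but it supplies a key lemma the paper does not have. The paper's argument is only a sketch: it asserts that under the induced matching ``an edge connects two nodes in the candidate \emph{if and only if} their corresponding nodes are connected by an edge in $T$,'' yet the ``only if'' direction is exactly the point you flag as the main obstacle, since no hard constraint ever directly forbids spurious on-edges (e.g., a surviving edge from $E_S$ into a slot of a matched node whose target counterpart is a leaf). Your two-sided cardinality argument closes this gap cleanly: $\Phi_{\cong}$ forces at least $|E_T| = |N_T| - 1$ distinct on-edges among matched nodes (distinctness via injectivity of $g$ and preservation of slot indices, which holds because the filtered potential edges $\widetilde{E}((n',e_i))$ carry the index $e_i$ along), while $\Phi_{CN}$ (at most one parent per child) together with $\Phi_{RN}$ (no parent for the root's match) caps the count at $|N_T| - 1$, so the on-edges among matched nodes are exactly the $g$-images of $E_T$ and no extra edge survives among them. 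Your secondary observation---that on-edges incident to unmatched nodes are removed along with $\mathcal{E}_{N-}^{\pi}$ and must be excluded before the count---is also needed and correctly handled; note that $\Phi_{CN}$, $\Phi_{RN}$, and the forced $\Phi_{\cong}$ edges in fact already rule out an unmatched parent above a matched node, so deleting $\mathcal{E}_{N-}^{\pi}$ cannot disconnect the matched tree. In short, where the paper appeals to the constraints' intent, you give an explicit isomorphism plus a counting argument that makes the minimality-free correctness claim fully rigorous; the only slack in your write-up is that $\Phi_{PN}$ is not actually needed once the edge count pins down the surviving edges, since slot-uniqueness is then inherited from $T$ itself.
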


\begin{proof}
We present a sketch of the proof.


The constraints $\Phi_{SN}$, $\Phi_{RN}$, and $\Phi_{TN}$ ensure the validity of the established matches between the nodes of the source space $S'$ and the target tree $T$. Similarly, the constraints $\Phi_{CN}$ and $\Phi_{PN}$ guarantee the validity of tree candidates from the modified source space $S'$. The isomorphism constraint $\Phi_{\cong}$ enforces a matching between the nodes of any tree candidate and $T$. In this matching, an edge connects two nodes in the candidate if and only if their corresponding nodes are connected by an edge in $T$. Ultimately, any resulting tree candidate satisfying all constraints in $hard(\Phi_{{\satdiff}})$ is a valid tree $T_\mathcal{E}$ that satisfies $T_\mathcal{E} \cong T$. To conclude, any solution $\pi$ s.t. $\pi \models hard(\Phi_{{\satdiff}})$ infers a correct $\mathcal{E}^{\pi}$.
\end{proof}

\begin{theorem} [\emph{Guarantee of minimality}]
\label{thm:sol_opt}
By optimizing the soft constraints  $soft(\Phi_{{\satdiff}})$,
\[
\min_{\pi} cost(\pi,\Phi_{{\satdiff}})
\]
we obtain the optimal solution $\pi^{*}$ as well as the minimum edit effect $\mathcal{E}^{\pi*}:=(\mathcal{E}_{N-}^{\pi*},\mathcal{E}_{E+}^{\pi*},\mathcal{E}_{E-}^{\pi*})$.
\end{theorem}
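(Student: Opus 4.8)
The plan is to show that the MaxSAT objective coincides, up to an additive constant, with the total size of the edit effect, and then to invoke Theorem~\ref{thm:sol_correct} so that the minimization ranges only over \emph{correct} edit effects. First I would make the cost function explicit. Taking all soft clauses to have unit weight, a clause of $\Phi_{DE}$ associated with an existing edge $(n,e_i,n_c)\in E_S$ is falsified by $\pi$ exactly when $\pi(\evar{n}{i}{n_c})=0$, i.e. exactly when that edge is disconnected and hence lies in $\mathcal{E}_{E-}^{\pi}$; dually, a clause of $\Phi_{AE}$ associated with a potential edge is falsified exactly when that edge is added, i.e. lies in $\mathcal{E}_{E+}^{\pi}$. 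Since the potential-edge set is $(N_{S'}\times N_{S'})\setminus E_S$ and is therefore disjoint from $E_S$, no edge is double-counted, and consequently $cost(\pi,\Phi_{{\satdiff}}) = |\mathcal{E}_{E-}^{\pi}| + |\mathcal{E}_{E+}^{\pi}|$.

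Next I would argue that the one quantity the soft constraints leave unaccounted for, the number of deleted nodes $|\mathcal{E}_{N-}^{\pi}|$, is constant across all feasible solutions. This follows from the hard matching constraints: $\Phi_{TN}$ forces every target node to be matched to exactly one source-space node, while $\Phi_{SN}$ permits each source-space node at most one match, so any feasible $\pi$ matches precisely $|N_T|$ nodes of $S'$ and deletes the remaining $|N_{S'}| - |N_T|$ nodes. Because the node additions $\mathcal{E}_{N+}$ are fixed by the construction of the source space $S'$, the total size of any correct edit effect equals $|\mathcal{E}_{E-}^{\pi}| + |\mathcal{E}_{E+}^{\pi}|$ plus a constant. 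Minimizing the MaxSAT cost is thus equivalent to minimizing the total number of edit operations.

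Finally I would assemble the pieces. By Theorem~\ref{thm:sol_correct}, every feasible $\pi$ (i.e. $\pi \models hard(\Phi_{{\satdiff}})$) yields a correct edit effect, and conversely, by the completeness of the encoding, every correct edit effect of the three-tuple form on $S'$ arises from some feasible assignment; hence the feasible solutions are in correspondence with the correct edit effects. Since the solver returns the feasible $\pi^{*}$ minimizing $cost(\cdot,\Phi_{{\satdiff}})$, and this cost equals the edit size up to the fixed additive constant identified above, $\mathcal{E}^{\pi^{*}}$ is a correct edit effect of minimum size, which is the claim.

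I expect the main obstacle to be the completeness direction used in the last step: one must argue that no correct edit effect with strictly fewer edge operations escapes the encoding. Concretely, this requires checking that the filtered potential-edge set $\bigcup_{(n',e_i)\in E_T}\widetilde{E}((n',e_i))$ of Remark~\ref{rk:opt_on_edges} still contains every edge that any minimal reconstruction of $T$ inside $S'$ could use, so that the MaxSAT optimum is genuinely the global minimum over correct edit effects rather than an artifact of a restricted search space. A secondary point worth stating carefully is that the optimum is taken over feasible solutions only, so the minimality asserted is minimality among \emph{correct} edit effects, consistent with Lemma~\ref{lemma:iso_exista} guaranteeing that this feasible set is nonempty.
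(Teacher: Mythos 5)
Your proposal is correct, and its core is the same computation the paper performs: with unit weights, each falsified clause of $\Phi_{DE}$ corresponds to a disconnected edge in $\mathcal{E}_{E-}^{\pi}$ and each falsified clause of $\Phi_{AE}$ to an added edge in $\mathcal{E}_{E+}^{\pi}$, so $cost(\pi,\Phi_{{\satdiff}}) = |\mathcal{E}_{E-}^{\pi}| + |\mathcal{E}_{E+}^{\pi}|$, after which correctness of the optimum is delegated to Theorem~\ref{thm:sol_correct}. The paper's proof stops there. You go further in two places, and both additions close gaps the paper leaves implicit. First, you observe that $|\mathcal{E}_{N-}^{\pi}|$ is the same for every feasible solution: $\Phi_{TN}$ matches each target node to exactly one source-space node and $\Phi_{SN}$ allows at most one match per source-space node, so exactly $|N_{S'}| - |N_T|$ nodes are unmatched (deleted) regardless of $\pi$. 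Without this, the theorem's claim that $\mathcal{E}^{\pi^*}$ is the minimum \emph{edit effect} (a tuple including node deletions) does not literally follow from minimizing an edge-only objective; the paper silently identifies the two. Second, you correctly isolate the completeness direction --- that no correct edit effect escapes the encoding --- as the real remaining obligation, and you pinpoint the right place where it could fail: the filtered potential-edge set of Remark~\ref{rk:opt_on_edges}. This obstacle can in fact be discharged: any edge used by a correct reconstruction must, by the isomorphism constraint $\Phi_{\cong}$, connect a node matched to some $n'$ with a node matched to $n'.c_i$ via $e_i$, and that is precisely the membership condition defining $\widetilde{E}((n',e_i))$; together with the retained variables for $E_S$, every edge any correct edit effect could use is encoded, so the MaxSAT optimum is the global minimum over correct edit effects (the feasible set being nonempty by Lemma~\ref{lemma:iso_exista}, as you note). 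In short: same decomposition as the paper, but your version makes the theorem's statement actually match what is minimized, which the published sketch does not.
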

\begin{proof}
Without loss of generality, we set the cost $wt_C = 1, \forall C \in soft( \Phi_{{\satdiff}})$. Then our optimization objective is:
\begin{align*}
    \min_{\pi} cost(\pi,\Phi_{{\satdiff}}) &= \min_{\pi} \sum_{C|C \in soft( \Phi_{{\satdiff}}) \wedge  \pi \not \models C} 1 = \min_{\pi} \sum_{C|C \in (\Phi_{DE} \wedge \Phi_{AE}) \wedge  \pi \not \models C} 1 \\
    & = \min_{\pi} (\sum_{C|C \in \Phi_{DE}  \wedge  \pi \not \models C} 1 + \sum_{C|C \in \Phi_{AE} \wedge  \pi \not \models C} 1)\\ & = \min_{\Delta^{\pi}} (|\mathcal{E}_{E-}^{\pi}| +  |\mathcal{E}_{E+}^{\pi}|)=\min_{\mathcal{E}^{\pi}} |\mathcal{E}_E^{\pi}| \text{ 
 } \text{ 
 }\text{ 
 }\text{ } 
\end{align*}
Intuitively, the solver finds a correct (by Theorem \ref{thm:sol_correct}) and optimal solution $\pi^{*}$ that minimizes the number of added potential edges and deleted existing edges, which is the minimum edit effect $\mathcal{E}^{\pi*}$. 



\end{proof}

\subsection{The decoding phase --- Decoding low-level edit scripts from {\maxsat} solution}
\label{sec:phase2}


\begin{definition}[\emph{Free edge}]
An edge $e$ is considered to be $free$ if there is no child currently connected to a parent node through that edge. We denote the collection of edges $e$ that are $free$ in a graph $(N,E)$ as $E_{free}$.
\end{definition}

\begin{definition}[\emph{Free node}]
A node $n$ is considered to be $free$ if there is no edge connecting $n$ to a parent node. In other words, $n$ is the root of a subtree. We denote the collection of nodes $n$ that are $free$ in a graph $(N,E)$ as $N_{free}$.
\end{definition}


\begin{definition}[\emph{Low-level edit actions}] We introduce three low-level (\emph{atomic}) edit actions: \texttt{dcon}, \texttt{del}, and \texttt{con}. These actions execute the node and edge operations described in edit effect $\mathcal{E}$. Specifically, \texttt{dcon} removes edges in $\mathcal{E}_{E-}^{\pi}$, \texttt{del} deletes nodes in $\mathcal{E}_{N-}^{\pi}$, and \texttt{con} adds edges in $\mathcal{E}_{E+}^{\pi}$, as illustrated in Figure \ref{fig:atomic}. We define each edit action as follows:
\begin{itemize}
    \item $\con{n_p}{i}{n}$: This action connects node $n$ to its new parent $n_p$ through the edge $e_i$, indicating that $n = n_p.c_i$. If node $n$ has children, this action also connects the subtree rooted at $n$ to the tree where $n_p$ belongs.
    \item $\dcon{n_p}{i}{n}$: This action detaches node $n$ from its parent $n_p$ by disconnecting edge $e_i$. After this action, the subtree rooted at node $n$ will be detached from the current tree and will become free, and so will the edge $(n_p,e_i)$. The disconnected subtree will be used in future actions.   
    \item $\del{n_p}{i}{n}$: This action deletes the subtree rooted at node $n$ from its parent $n_p$ by disconnecting edge $\fe{n_p}{i}$. The edge $\fe{n_p}{i}$ becomes free, and the subtree that was previously connected to it is removed and will not be used in future actions.
\end{itemize}
\end{definition}


\begin{figure}
    \centering

    \includegraphics[width=0.9\textwidth, trim={0 3.5cm 0 1cm},clip]{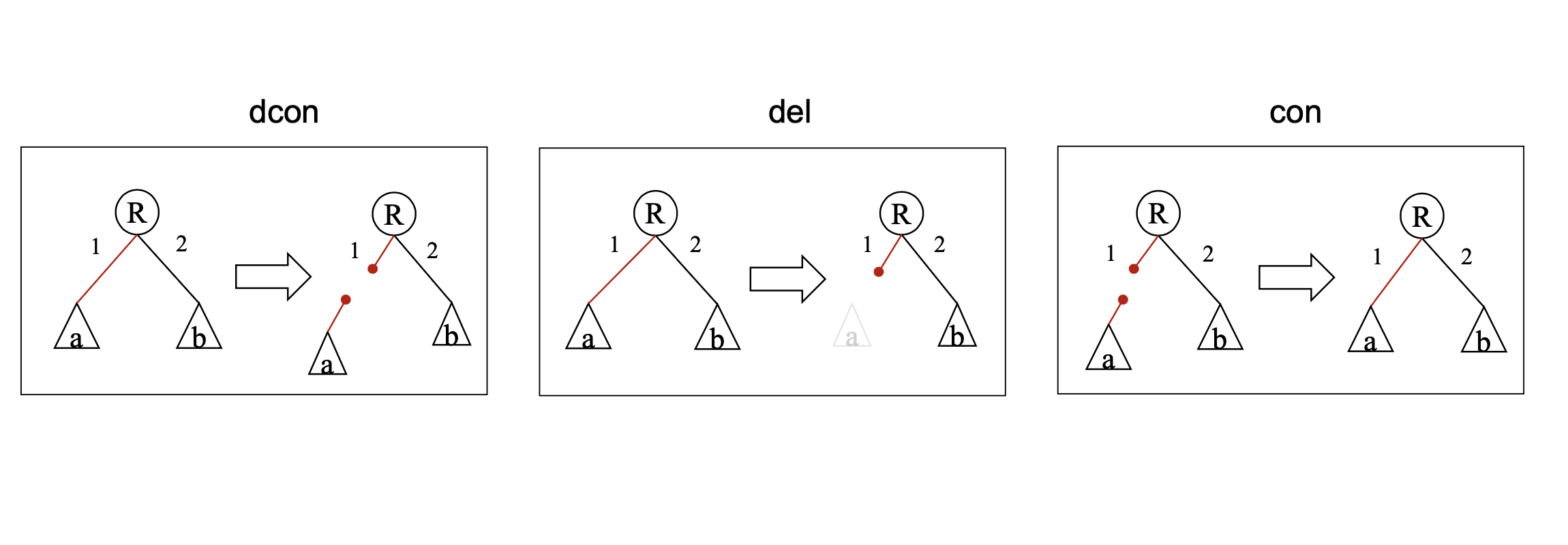}
    \caption{Low-level edit actions execute edge and node operations described in the edit effect.}
    \label{fig:atomic}
\end{figure}

\begin{remark}

In the earlier section, we demonstrate that SAT solutions can be interpreted as edit effects, which are executed by low-level edit actions. To illustrate, we provide examples of decoding SAT solutions into low-level edit actions based on the AST pairs introduced in Figure \ref{fig:overview_sat}.

\begin{itemize}
    \item $\evar{2}{1}{4} = 0$ and $\mvar{4}{18} = 1$ imply that the edge $\fe{2}{1}$ is available, and node 4 is needed later due to a matching literal, resulting in the atomic edit action $\dcon{2}{1}{4}$.
    \item In another scenario with $\evar{3}{2}{7} = 0$ and no true match variables for node 7, it suggests the deletion of the subtree rooted at node 7, leading to the atomic edit action $\del{3}{2}{7}$.
    \item $\evar{3}{2}{4} = 1$ indicates a connection between nodes 3 and 4 via the edge $\fe{3}{2}$, the derived atomic edit action is $\con{3}{2}{4}$.
\end{itemize}
\end{remark}

\begin{definition}[\emph{Edit script}]
An edit script, denoted as $\Delta$, is a sequence of atomic edit actions. Its semantics can be seen as a mapping defined on the graph space $G$. Formally, this mapping is represented as $ \llbracket \Delta \rrbracket  : G \rightarrow G{\perp}$, which yields ${\perp}$ if $\Delta (S') \ncong T$, where $S', T \in G$. 
And the correctness of $\Delta$ is indicated by $\Delta (S') \cong T$. 


\end{definition}

\begin{definition}[\emph{Low-level edit script}] A \emph{low-level edit script}, denoted as $\underline{\Delta}$, is an edit script composed of low-level edit actions: $\con{n_p}{i}{n}$, $\dcon{n_p}{i}{n}$, and $\del{n_p}{i}{n}$.  We present the syntax of $\underline{\Delta}$ as follows: $$\underline{\Delta} := \ \ \cdot \ \ | \ \  \texttt{con} \ \ | \ \ \texttt{dcon} \ \ | \ \ \texttt{del} \ \  | \  \ \underline{\Delta}_1; \underline{\Delta}_2$$
 
Note that $\underline{\Delta}$ must be arranged in an appropriate execution order. An execution order is considered \emph{appropriate} when there is no conflict, meaning that an edge or node is used only when it is in the free state. Then we present the semantics of $\underline{\Delta}$ as follows:

\begin{prooftree}
    \AxiomC{}
    \UnaryInfC{ $\rho_0 \vdash_{L}  \cdot  \Downarrow \rho_0 $ }
    \DisplayProof \ \ \ \ \ 
    \AxiomC{$\rho_0 \vdash_{L}   \underline{\Delta}_1  \Downarrow \rho_1 $}
    \AxiomC{$\rho_1 \vdash_{L}   \underline{\Delta}_2  \Downarrow  \rho_2 $}
    \BinaryInfC{ $\rho_0  \vdash_{L}   \underline{\Delta}_1;\underline{\Delta}_2  \Downarrow  \rho_2 $ }
\end{prooftree}
    
\begin{prooftree}
    \AxiomC{$n \in F_0 $}
    \AxiomC{$(n_p, e_i) \in F_0 $}
    \AxiomC{$(n_p, e_i, n) \notin S_0'$ }
    \TrinaryInfC{ $\rho_0  \vdash_{L} \texttt{con}(n_p, e_i, n) \Downarrow S_0' \cup \{(n_p, e_i, n)\} | F_0 \setminus \{n, (n_p,e_i) \} $ }
\end{prooftree}

\begin{prooftree}
    \AxiomC{$ n \notin F_0 $}
    \AxiomC{$ (n_p, e_i) \notin F_0 $}
    \AxiomC{$ (n_p, e_i, n) \in S_0' $}
    \TrinaryInfC{ $\rho_0  \vdash_{L} \texttt{dcon}(n_p, e_i, n) \Downarrow  S_0' \setminus \{(n_p, e_i, n)\} | F_0 \cup \{n, (n_p,e_i) \} $ }
\end{prooftree}

\begin{prooftree}
    \AxiomC{$\rho_0  \vdash_{L} \texttt{dcon}(n_p, e_i, n) \Downarrow \rho_1 $}
    \AxiomC{$\rho_1 \vdash_{L} \texttt{del}(n,e_1,n._{c_1}) \Downarrow \rho_2 $}
    \AxiomC{$\rho_2 \vdash_{L} \texttt{del}(n,e_2,n._{c_2}) \Downarrow \rho_3 $}
    \TrinaryInfC{$\rho \vdash_{L} \texttt{del}(n_p, e_i, n) \Downarrow  S'_3 \setminus \{n\} |  F_3 $}
\end{prooftree}



\vspace{10 pt}
where $ \rho_i := S'_i|F_i $, $S'_i$ is the current source space and $F_i$ is set of \emph{free} nodes and \emph{free} edges in the current source space $S'_i$,  $i \in \{0,1,2,3\}$. $\underline{\Delta}_1$ and $\underline{\Delta}_2$ are any two sequences of low-level edit scripts. We say $\underline{\Delta}$ is correct if, after the execution of $\underline{\Delta}$, we have:  
$S' | N_T \vdash_L \underline{\Delta}(S') \Downarrow T | \{\ \}$.

\end{definition}

\begin{theorem}
\label{thm:opt_correct_low}
[\emph{Correct and minimum low-level edit scripts}] Any correct solution $\pi$ to $\Phi_{{\satdiff}}$ can be decoded into a correct low-level edit script $\underline{\Delta}_{\pi}$. The optimal solution $\pi^*$ yields the correct and minimum low-level edit script $\underline{\Delta}^*_{\pi}$. 
\end{theorem}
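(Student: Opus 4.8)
The plan is to establish the two assertions in turn, using Theorems~\ref{thm:sol_correct} and~\ref{thm:sol_opt} as black boxes. For the correctness assertion, I would first decode $\pi$ into an (unordered) collection of atomic actions by the rules of the preceding remark: each removed existing edge $(n_p,e_i,n)\in\mathcal{E}_{E-}^{\pi}$ yields $\texttt{dcon}(n_p,e_i,n)$ when $n$ is matched and $\texttt{del}(n_p,e_i,n)$ when $n$ is unmatched (i.e.\ $n\in\mathcal{E}_{N-}^{\pi}$), while each added potential edge $(n_p,e_i,n)\in\mathcal{E}_{E+}^{\pi}$ yields $\texttt{con}(n_p,e_i,n)$. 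The content of the proof is then to produce an \emph{appropriate} ordering of these actions, since the operational semantics of $\underline{\Delta}$ fires an action only when the node or slot it touches is free.

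I would schedule the actions in two phases: a disconnect phase holding every $\texttt{dcon}$ and $\texttt{del}$, followed by a connect phase holding every $\texttt{con}$. To justify appropriateness I would build a dependency graph whose precedence edges are (i) a $\texttt{dcon}$ that detaches a reused descendant precedes the $\texttt{del}$ removing its former enclosing subtree, and (ii) any disconnect that frees a node $n$ or a slot $(n_p,e_i)$ precedes a $\texttt{con}$ consuming that node or slot. The key lemma is that this graph is acyclic, whence a topological order --- an appropriate execution order --- exists. The disconnect/connect boundary cannot lie on a cycle; within the disconnect phase the only constraints follow the source tree's ancestor relation and are acyclic; and within the connect phase the hard constraints $\Phi_{CN}$ and $\Phi_{PN}$ guarantee that every reattached node has at most one (now-removed) parent edge and every target slot has at most one (now-removed) occupant, so after the disconnect phase all required nodes and slots are free and no two $\texttt{con}$ actions contend for the same resource. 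Executing in this order and invoking Theorem~\ref{thm:sol_correct} gives that the final graph is $T_{\mathcal{E}}\cong T$, i.e.\ $S'\,|\,N_T\vdash_L\underline{\Delta}_\pi(S')\Downarrow T\,|\,\{\}$, which is exactly correctness of $\underline{\Delta}_\pi$.

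For minimality I would argue that the decoding is structurally size-preserving. The number of $\texttt{con}$ actions is $|\mathcal{E}_{E+}^{\pi}|$ and the number of edge-removals (performed by $\texttt{dcon}$ or inside $\texttt{del}$) is $|\mathcal{E}_{E-}^{\pi}|$, so the edge operations of $\underline{\Delta}_\pi$ total $|\mathcal{E}_E^{\pi}|$. Moreover, $\Phi_{SN}$ and $\Phi_{TN}$ force the match relation to be an injection from $N_T$ into $N_{S'}$, so the number of deleted nodes is $|N_{S'}|-|N_T|$, a constant independent of $\pi$; hence the total size of $\underline{\Delta}_\pi$ is a monotone function of $|\mathcal{E}_E^{\pi}|$ alone. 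Since Theorem~\ref{thm:sol_opt} says the optimal $\pi^{*}$ minimizes exactly $|\mathcal{E}_E^{\pi}|$ while remaining correct, the decoded $\underline{\Delta}^*_\pi$ is correct by the first part and of minimum size.

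I expect the main obstacle to be the acyclicity lemma, and specifically the interaction between $\texttt{dcon}$ and $\texttt{del}$: a reused node buried inside a subtree slated for deletion must be detached before the recursive deletion reaches it, yet $\texttt{del}$ traverses that subtree, so I must show the induced precedence never closes a cycle. This is exactly where the source space's tree structure together with the at-most-one-parent and at-most-one-$i$th-child constraints $\Phi_{CN}$ and $\Phi_{PN}$ are essential, ruling out pathological cycles such as two subtrees each waiting for the other's slot to be vacated.
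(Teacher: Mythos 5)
Your proposal is correct and shares its skeleton with the paper's proof --- decode $\pi$ into atomic actions, invoke Theorems~\ref{thm:sol_correct} and~\ref{thm:sol_opt} for correctness and minimality of the edit effect $\mathcal{E}^{\pi}$, and reduce the rest to the existence of an \emph{appropriate} execution order certified by a dependency graph --- but you supply substantially more than the paper's two-sentence sketch, in two places. First, where the paper merely asserts that ``each low-level edit action either keeps the free state unchanged or changes it to its opposite'' and points at Algorithm~\ref{alg:get_dependency_graph}, you give a concrete schedule (all \texttt{dcon}/\texttt{del}, then all \texttt{con}) and an explicit acyclicity lemma, correctly isolating the one delicate interaction (a reused descendant must be \texttt{dcon}'d before the \texttt{del} of its enclosing subtree, which is exactly the \texttt{del}-case edge in Algorithm~\ref{alg:get_dependency_graph}) and correctly identifying $\Phi_{CN}$/$\Phi_{PN}$ as what rules out contention among \texttt{con}'s. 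Second, your minimality argument closes a gap the paper glosses over: Theorem~\ref{thm:sol_opt} only minimizes $|\mathcal{E}_E^{\pi}|$, and you explain why that suffices for minimality of the \emph{script} by counting actions ($|\mathcal{E}_{E+}^{\pi}|$ \texttt{con}'s plus $|\mathcal{E}_{E-}^{\pi}|$ edge removals) and observing via $\Phi_{SN}$/$\Phi_{TN}$ that the node-deletion budget is the constant $|N_{S'}|-|N_T|$ independent of $\pi$; the paper simply cites the two theorems. One caveat: your dependency edges only run from disconnects to \texttt{con}'s, whereas the paper's Algorithm~\ref{alg:get_dependency_graph} also adds \texttt{con}$\rightarrow$\texttt{con} edges forcing a parent to be attached before its children --- and the paper's own bookkeeping (see the $\texttt{Ins}$ rule and Table~\ref{tab:atomic_ordering}, where $(22,e_1),(22,e_2)$ become free only \emph{after} $\con{2}{2}{22}$) treats a virtual node's child slots as free only once that node is connected, so your claim that ``after the disconnect phase all required nodes and slots are free'' fails for virtual-node slots. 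The fix is immediate --- order the connect phase top-down along the target tree, which adds only edges following the (acyclic) ancestor relation of $T$ --- so this is a repairable omission rather than a broken step.
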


\begin{proof}
It is trivial to show that a low-level edit script $\underline{\Delta}_{\pi}$ corresponds to an edit effect $\mathcal{E}^{\pi}$ associated with an appropriate execution order. By Theorem \ref{thm:sol_correct} and \ref{thm:sol_opt}, the above theorem holds. It only remains to show that an appropriate execution order must exist when $\mathcal{E}^{\pi}$ is correct. 

Note that $\mathcal{E}^{\pi}$ being correct implies that no free edges and nodes are left when execution finishes. Since each low-level edit action either keeps the free state unchanged or changes the state to its opposite, then an appropriate execution ordering must exist. It can be constructed by taking into account the dependencies between edit actions, as presented in Algorithm \ref{alg:get_dependency_graph}.

\end{proof}

\subsection{The synthesis phase --- Merging high-level edit actions from low-level edit actions}
\label{sec:phase3}


While low-level edit actions characterize the necessary transformation from the source tree to the target tree, they are not intuitive. To improve readability for end users, we introduce five high-level edit actions.
\begin{definition}[\emph{High-level edit actions}] 
\label{def:high-level-edits}
We define the following high-level edit actions:
\begin{itemize}
    \item $\Ins{n_p}{i}{n}$: This operation involves adding a \emph{virtual node} $n$ to the $i$\textsuperscript{th} child position of node $n_p$ through $e_i$. 
    
    \item $\Del{n_p}{i}{n}$:  This action deletes the subtree rooted at node $n$ from its parent $n_p$ by disconnecting edge $e_i$. The deleted subtree will not be used again. 

    \item $\Upd{n}{l}{v}$: 
    This editing action involves updating the label and/or value of node $n$ in the tree. It replaces the current label with a new assignment $l$ and/or updates the value with $v$. Importantly, this action maintains the node's position within the tree structure. 
     
    \item $\Mov{n}{n_p}{i}{n_p'}{j}$: 
    This operation involves moving node $n$ along with its subtree from being the $i$\textsuperscript{th}  child of its current parent $n_p$ to becoming the $j$\textsuperscript{th}  child of its new parent $n_p'$. This means that both the node $n$ and the subtree rooted at $n$ are relocated to the new position.

    \item $\Swp{n}{n_p}{i}{n'}{n_p'}{j}$: This operation consists of two $\texttt{Mov}$'s whose destination conflicts with the other's parent node. However, $\texttt{Swp}$ merges the low-level actions of two $\texttt{Mov}'s$ differently to avoid this conflict. Furthermore, our future extension of this project $\texttt{Swp}$ can generalize to an unlimited amount of conflicting $\texttt{Mov}$'s.
    
\end{itemize}
\end{definition}

\begin{definition}[\emph{High-level edit script}]
\label{def:sem_high}
A high-level edit script, denoted as $\overline{\Delta}$, is an edit script composed of high-level edit actions including \texttt{Upd, Mov, Del, Ins} and \texttt{Swp}. We present the syntax of $\overline{\Delta}$ as follows: 
$$\overline{\Delta} ::=  \ \ \cdot \ \ | \ \ \texttt{Ins} \ \ | \ \ \texttt{Del} \ \ | \ \ \texttt{Upd} \ \ | \  \ \texttt{Mov} \ \ | \ \  \texttt{Swp} \ \ | \ \ \overline{\Delta}_1; \overline{\Delta}_2$$
These high-level edit actions can be merged by low-level edit actions and require the same condition for an \emph{appropriate} execution order, i.e., an edge or node is used only when it is in the free state. We present the semantics of $\overline{\Delta}$ as follows:

\begin{prooftree}
    \AxiomC{}
    \UnaryInfC{ $\rho_0  \vdash_{H}  \cdot  \Downarrow \rho_0  $ }
    \DisplayProof \ \ \ \ \ 
    \AxiomC{$\rho_0  \vdash_{H}   \overline{\Delta}_1  \Downarrow \rho_1 $}
    \AxiomC{$\rho_1  \vdash_{H}   \overline{\Delta}_2  \Downarrow  \rho_2 $}
    \BinaryInfC{ $\rho_0  \vdash_{H}   \overline{\Delta}_1;\overline{\Delta}_2  \Downarrow  \rho_2 $ }
    \DisplayProof \ \ \ \ \
    \AxiomC{$\rho_0  \vdash_{L} \del{n_p}{i}{n} \Downarrow \rho_1 $}
    \UnaryInfC{$\rho_0  \vdash_{H} 
    \Del{n_p}{i}{n} \Downarrow  \rho_1 $}
\end{prooftree}

\begin{prooftree}
    \AxiomC{$ n'  \in F_0 $}
    \AxiomC{$\rho_0 \vdash_{L} \texttt{del}(n_p, e_i, n) \Downarrow \rho_1 $}
    \AxiomC{$\rho_1 \vdash_{L} \texttt{con}(n_p, e_i, n') \Downarrow \rho_2 $}
    \TrinaryInfC{ $\rho_0 \vdash_{H} \Upd{n}{l}{v} \Downarrow  \rho_2 $ }
\end{prooftree}

\begin{prooftree}
    \AxiomC{$ n \in F_0 $}
    \AxiomC{$ n \in S'_0 $}
    \AxiomC{$\rho_1 \vdash_{L} \texttt{con}(n_p, e_i, n) \Downarrow \rho_2 $}
    \TrinaryInfC{ $\rho_0 \vdash_{H} \texttt{Ins}(n,n_p,e_i) \Downarrow  S'_2| F_2 \cup \{ (n,e_1),(n, e_2) \} $} 
\end{prooftree}

\begin{prooftree}
    \AxiomC{$ (n_p', e_j)  \in F_0 $} 
    \AxiomC{$\rho_0 \vdash_{L} \texttt{dcon}(n_p, e_i, n) \Downarrow \rho_1 $}
    \AxiomC{$\rho_1 \vdash_{L} \texttt{con}(n_p', e_j, n) \Downarrow \rho_2 $}
    \TrinaryInfC{ $\rho_0 \vdash \Mov{n}{n_p}{i}{n_p'}{j} 
    \Downarrow  \rho_2 $ }
\end{prooftree}

\begin{prooftree}
    \AxiomC{$\rho_0 \vdash_{L} \texttt{dcon}(n_p', e_j, n') \Downarrow \rho_1 $}
    \AxiomC{$\rho_1 \vdash_{H} \Mov{n}{n_p}{i}{n_p'}{j} 
    \Downarrow  \rho_2 $}
    \AxiomC{$\rho_2 \vdash_{L} \texttt{con}(n_p, e_i, n) \Downarrow \rho_3 $}
    \TrinaryInfC{ $\rho_0 \vdash_{H}   \Swp{n}{n_p}{i}{n'}{n_p'}{j} \Downarrow  \rho_3 $ }
\end{prooftree}

\end{definition}

To determine the appropriate execution order for both low-level and high-level edit scripts, we utilize a dependency graph to characterize the complex dependencies between edit actions and node relationships. We define the dependency graph as follows.

\begin{definition}[\emph{The dependency graph}]
A dependency graph, denoted as $K$, is a directed acyclic graph (DAG) with vertices representing low-level edit actions and directed edges indicating dependencies. More specifically, for two distinct edit actions $u$ and $w$, we connect them with the edge $u \rightarrow w$. The notation $u \rightarrow w$ signifies that $u$ depends on $w$, indicating that $w$ must occur before $u$.
\end{definition}
\vspace{-10pt}
\begin{algorithm2e}[]
    \caption{Construct Dependency Graph}
    \label{alg:get_dependency_graph}
    \small
    \DontPrintSemicolon
    \SetKwProg{Fn}{Function}{}{end}
    \SetKwFunction{getDPG}{GetDependencyGraph}
    \KwIn{a set of low-level edit actions, $V$}
    \KwOut{a dependency graph of low-level edit actions, $\langle V,E\rangle$}
        $E \leftarrow \emptyset$ \;
        \For{$action$ $\textbf{in}$ $V$}{
            \Switch{$action$}{
                \Case{$\con{n_p}{i}{n}$}{
                $E$ $\leftarrow$ $E$ $\cup$ \{$action \rightarrow \con{n_{gp}}{j}{n_p} $ | $n_{gp}$ is the parent of $n_{p}$, and  $n_{p}$ is the parent of $n$\}\;
                $E$ $\leftarrow$ $E$ $\cup$ \{$action \rightarrow \dcon{n_{p}}{i}{n'}$ or $\dcon{n_{p}'}{j}{n}$\}\;
                $E$ $\leftarrow$ $E$ $\cup$ \{$action \rightarrow \del{n_{p}}{i}{n'} $ or $\del{n_{p}'}{j}{n}$\}\;
                }
                \Case{$\del{n_p}{i}{n}$}{ $E$ $\leftarrow$ $E$ $\cup$ \{$action \rightarrow \dcon{n_{p}'}{j}{n'}$ | $n_p'$ is a descendant of $n$\}\;
                }
            }
}      
        \Return{$\langle V, E\rangle$}
\end{algorithm2e} 
\vspace{-10pt}
                

\begin{remark} [Determining an appropriate order for edit scripts]
Given a set of low-level edit scripts, we run Algorithm \ref{alg:get_dependency_graph} to get the dependency graph. Then we can construct an appropriate order as follows: For each connected component, we generate a sequence by recursively slicing out each independent node. Specifically, if there is a dependency $u \rightarrow w$ between nodes $u$ and $w$, meaning $u$ depends on $wu$, slicing out $w$ makes $u$ independent. This process results in a sequence of edit actions for each connected component. These sequences can then be merged in any order as they are independent from each other. \\ 
To better illustrate this process, we use the ASTs from Figure \ref{fig:overview_sat1} and generate the corresponding dependency graph in Figure \ref{fig:topo_example}. Following the aforementioned procedure, there are two connected components. For the one shown in Figure \ref{fig:topo_a}, we first slice out independent vertices $\del{3}{2}{7}$ and $\dcon{2}{1}{4}$, then $\con{3}{2}{4}$ becomes independent, so we slice it out next. Then we end up with a sequence $(\del{3}{2}{7};\dcon{2}{1}{4};\con{3}{2}{4};\con{2}{1}{21})$. For the other connected component shown in Figure \ref{fig:topo_b}, we first slice out $\dcon{5}{1}{8}$ and $\dcon{5}{2}{9}$, then $\del{2}{2}{5}$ becomes independent and needs to be sliced out next. Repeated the process, we have a sequence $(\dcon{5}{1}{8};\dcon{5}{2}{9};\del{2}{2}{5}; \con{2}{2}{22};\con{22}{1}{8};\dcon{22}{2}{9})$. The final low-level edit scripts can be merged using these two sequences in any order. To show the execution order is appropriate, we present detailed tracking of free edges and nodes in Table \ref{tab:atomic_ordering}. 
\end{remark}


\begin{figure}
    \subfloat[Ins, Mov and Del]{
        \centering
        \includegraphics[width=0.29\textwidth, page=1, trim={0 2cm 18cm 2cm},clip]{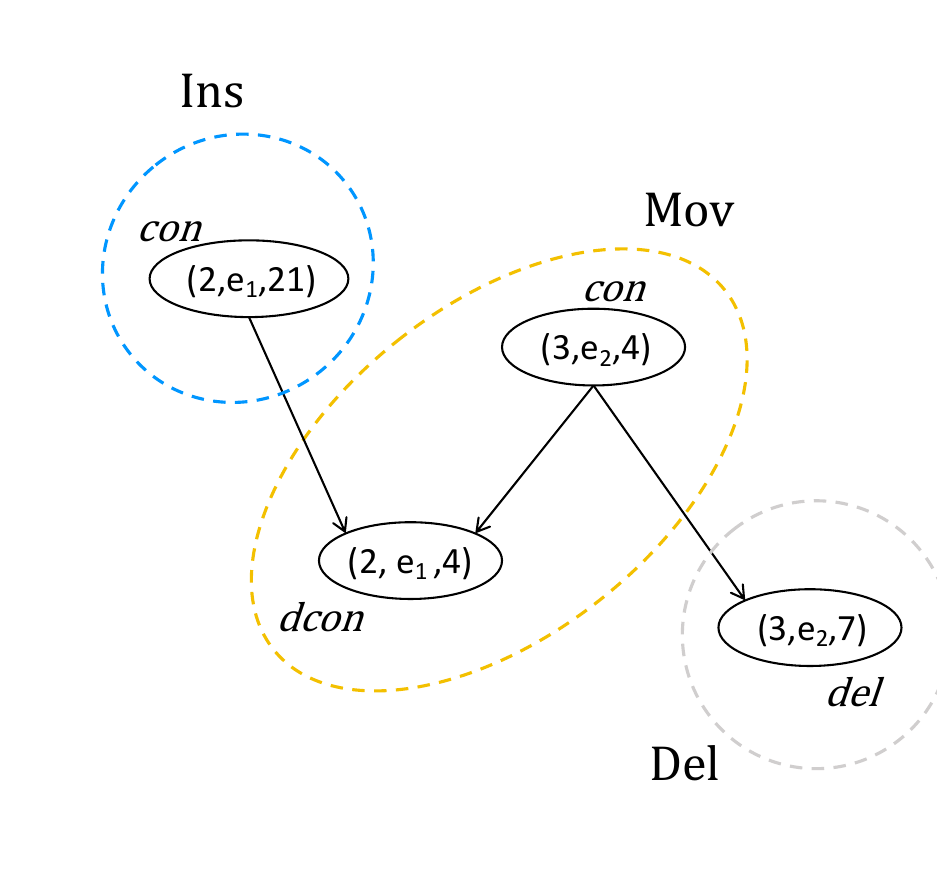}
        \label{fig:topo_a}
    }
    \subfloat[Upd]{
    \centering
        \includegraphics[width=0.29\textwidth, page=1, trim={16cm 2cm 2cm 2cm},clip]{figures/topo.pdf}
        \label{fig:topo_b}
    }
    \subfloat[Swp]{\includegraphics[width=0.33\textwidth, page=2, trim={0cm 2cm 16cm 2cm},clip]{figures/topo.pdf}
    \label{fig:topo_c}
    }
    \caption{Dependency graph of low-level edit actions. The dashed regions contain a group of low-level edit actions that can merged to high-level edit actions.}
    \label{fig:topo_example}
    \vspace{-5pt}
\end{figure}

\vspace{-7pt}

\begin{remark} [Synthesize high-level edit scripts]
Once we obtain the dependency graph of low-level edit actions, we can determine the appropriate execution order for both low-level and high-level edit scripts. Then it remains to merge certain subsequences/groups of low-level edit actions into high-level operations. To achieve this, we introduce the following templates:

\begin{itemize}
    \item $\Mov{n}{n_p}{i}{n_p'}{j}$: $\con{n_p'}{j}{n}$ $\rightarrow$ $\dcon{n_p}{i}{n}$
    \item $\Swp{n}{n_p}{i}{n'}{n_p'}{j}$: $\Mov{n}{n_p}{i}{n_p'}{j}$ $\rightleftarrows$ $\Mov{n'}{n_p'}{j}{n_p}{i}$
    \item $\Upd{n}{l}{v}$: $\con{n_p}{l}{v}$ $\rightarrow$ $\del{n_p}{l}{n}$ $\rightarrow$ $\{\dcon{n}{i}{n_c} | \con{v}{i}{c} \rightarrow \con{n_p}{l}{v} \}$
    \item $\Ins{n_p}{i}{n}$/$\Del{n_p}{i}{n}$: independent $\texttt{del}$/$\texttt{con}$ are not involved in the other templates.
\end{itemize}

As declared in Definition \ref{def:high-level-edits} that a $\Mov{n}{n_p}{i}{n_p'}{j}$ is made of $\con{n_{p}'}{j}{n}$ and $\dcon{n_p}{i}{n}$, the $\texttt{Mov}$ template simply specifies the dependency relation between the two actions. Intuitively, to move a subtree rooted at $n$, we need to disconnect $n$ from its parent $n_p$ first so that $n$ is ready to be moved, thus $\con{n_{p}'}{j}{n}$ $\rightarrow$ $\dcon{n_p}{i}{n}$. Next, the $\texttt{Swp}$ operation means two nodes switching their children. Notably, $\texttt{Swp}$ requires two $\texttt{Mov}$ to depend on each other, i.e., $\texttt{Mov1}$ $\rightleftarrows$ $\texttt{Mov2}$, so neither of them can be sliced out using the $\texttt{Mov}$ template. In other words, having $\texttt{Swp}$ avoids the awkward non-halting situation of our synthesizing algorithm for high-level edit script. \texttt{Upd} on a node essentially involves disconnecting it from its children, deleting it, and filling the hole with a new node, so that we can connect the children back.
Thus, $\con{n_p}{l}{v}$ depends on $\del{n_p}{l}{n}$, which in turn depends on $\dcon{n}{i}{n_c}$ for all children. In addition, the templates for $\texttt{Ins}$ and $\texttt{Del}$ are simply the $\texttt{con}$ and $\texttt{del}$ that cannot fit in any of the above templates.
To give some concrete examples, we show how to synthesize high-level edit scripts based on tree inputs shown in Figure \ref{fig:overview_sat}.
\begin{itemize}
    \item We first slice out $\del{3}{2}{7}$ as it is independent. Since it does not match any template, it is recognized as $\Del{3}{2}{7}$.
    \item The action$\con{3}{2}{4}$ depends on $\dcon{2}{1}{4}$, which matches our template for \texttt{Mov}. So we merge them into $\Mov{4}{2}{1}{3}{2}$ and then slice out.
    \item Next, the action $\con{2}{2}{21}$ becomes independent. Since it does not match any template, it can be recognized as an insert action $\Ins{2}{1}{21}$.
    \item The action $\del{2}{2}{5}$ depends on $\dcon{5}{1}{8}$ and $\dcon{5}{2}{9}$. and 
    $\con{2}{2}{22}$ depends on it. Moreover, $\con{22}{1}{8}$ and $\con{22}{2}{9}$ depends on $\con{2}{2}{22}$, then these actions match $\texttt{Upd}$ template. So we merge them into $\texttt{Upd}(5,Add)$ and then slice out.
\end{itemize}

\end{remark}

\begin{theorem} [\emph{Guarantee of correctness}] 
\label{thm:type-safe-low-edit} 
The high-level edit script $\overline{\Delta}$, synthesized from a correct low-level edit script $\underline{\Delta}$ using dependency graph $K$, is also correct.
\end{theorem}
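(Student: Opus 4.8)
The plan is to prove correctness by showing that the high-level semantics of Definition \ref{def:sem_high} is a \emph{faithful refinement} of the low-level semantics: evaluating any high-level action performs exactly the same state transition $\rho_0 \Downarrow \rho_k$ as evaluating the block of low-level actions it was merged from. Given this, together with the fact that the synthesis in Section \ref{sec:phase3} only regroups and topologically reorders the actions of $\underline{\Delta}$ without adding or dropping any, correctness transfers directly from Theorem \ref{thm:opt_correct_low}.

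First I would prove a faithful-unfolding lemma by inspecting each high-level rule. Every rule in Definition \ref{def:sem_high} is stated so that its premises are precisely the low-level judgments $\rho_{t-1} \vdash_L \cdot \Downarrow \rho_t$ for the constituent \texttt{con}/\texttt{dcon}/\texttt{del} actions, and its conclusion carries the same initial state $\rho_0$ and final state $\rho_k$. For instance, the \texttt{Mov} rule threads $\dcon{n_p}{i}{n}$ then $\con{n_p'}{j}{n}$ and concludes with the end state of the \texttt{con}; \texttt{Del} coincides with a single \texttt{del}; \texttt{Ins} with a single \texttt{con} plus the freeing of the two child slots of the inserted node; \texttt{Upd} with \texttt{del};\texttt{con} over the replaced node together with the \texttt{dcon}'s of its children; and \texttt{Swp} with the interleaving $\texttt{dcon};\texttt{Mov};\texttt{con}$. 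Hence $\rho_0 \vdash_H H \Downarrow \rho_k$ holds iff the corresponding low-level block evaluates $\rho_0 \Downarrow \rho_k$, i.e. the two judgments are interderivable.

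Next I would lift this to whole scripts by structural induction on the syntax of $\overline{\Delta}$. The base case $\overline{\Delta}=\cdot$ is immediate, and the sequencing rule for $\vdash_H$ mirrors the one for $\vdash_L$, while the single-action case is exactly the lemma above. This yields that $\rho_0 \vdash_H \overline{\Delta} \Downarrow \rho_m$ precisely when the \emph{unfolding} of $\overline{\Delta}$ into its constituent low-level actions evaluates $\rho_0 \Downarrow \rho_m$ under $\vdash_L$. I then argue, from the template definitions in the synthesis remark, that this unfolding contains every low-level action of $\underline{\Delta}$ exactly once: each \texttt{Mov}/\texttt{Swp}/\texttt{Upd} template consumes a fixed group of \texttt{con}/\texttt{dcon}/\texttt{del}, and the leftover actions are caught verbatim by the \texttt{Ins} and \texttt{Del} templates, so the unfolding is a permutation of $\underline{\Delta}$. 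Because the synthesis extracts the order by topologically slicing the dependency graph $K$ (Algorithm \ref{alg:get_dependency_graph}) and the merging templates only impose ordering constraints already consistent with $K$, this permutation is a topological linearization of $K$; by the reasoning of Theorem \ref{thm:opt_correct_low} every such order is \emph{appropriate} (conflict-free), since each low-level action merely toggles the free-state of the node/edge it touches and $K$ records exactly the producer/consumer dependencies on free slots. As $\underline{\Delta}$ is correct it drives $S'|N_T$ to $T|\{\}$, and the unfolding of $\overline{\Delta}$ executes the same multiset of actions in an equally appropriate order, so it reaches the same $T|\{\}$; therefore $\overline{\Delta}(S') \cong T$ and $\overline{\Delta}$ is correct.

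The hard part will be the \texttt{Swp} case. There the two constituent \texttt{Mov}'s are mutually dependent, so neither can be sliced out of $K$ on its own, and the generic ``any topological order is appropriate'' argument does not apply directly. I must confirm that the dedicated \texttt{Swp} rule interleaves its four low-level actions in the order $\dcon{n_p'}{j}{n'};\dcon{n_p}{i}{n};\con{n_p'}{j}{n};\con{n_p}{i}{n}$ (as encoded by $\texttt{dcon};\texttt{Mov};\texttt{con}$), so that each \texttt{con} fires only after the slot and subtree it needs have been freed by a preceding \texttt{dcon}, and that this is exactly the interleaving the synthesizer emits when it detects the cyclic \texttt{Mov} dependency. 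Verifying appropriateness of this hand-built interleaving, and that it correctly resolves the otherwise non-halting situation flagged in Definition \ref{def:high-level-edits}, is the one step I expect to require a direct, by-hand check rather than an appeal to the topological-order argument.
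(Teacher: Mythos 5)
Your proposal is correct and follows essentially the same route as the paper's (much briefer) proof sketch: the paper likewise argues exactly two points, namely that merging low-level actions into high-level ones ``has no impact on the edit effect'' and that $\overline{\Delta}$ inherits the appropriate execution order determined by the dependency graph $K$, so correctness transfers from Theorem \ref{thm:opt_correct_low}. Your faithful-unfolding lemma, the structural induction, the permutation/topological-linearization argument, and the explicit hand-check of the mutually dependent \texttt{Mov}'s inside \texttt{Swp} simply supply the rigor that the paper's two-sentence sketch leaves implicit.
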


\begin{proof}
We provide a proof sketch. 
Since the low-level edit script $\underline{\Delta}$ has the appropriate execution order determined by the dependency graph $K$ generated by Algorithm \ref{alg:get_dependency_graph}, the corresponding high-level edit script $\overline{\Delta}$ also exhibits the same appropriate execution order. Then merging low-level edit actions into high-level edit actions has no impact on the edit effect. Thus, the synthesized $\overline{\Delta}$ must be correct when $\underline{\Delta}$ is correct.
\end{proof}

\begin{table}
\begin{tabular}{lccc}
  \hline
      High-level edits ($\overline{\Delta}$)  & Low-level edits ($\underline{\Delta}$) & \textit{Free} Node ($N_{free}$)  & \textit{Free} Edge ($E_{free}$)  \\
  \hline
  \hline
  
   &  & 21,22 &  $\emptyset$  \\
  \hline
   $\Del{3}{2}{7}$ & $\del{3}{2}{7}$ &   21,22 & $\fe{3}{2}$   \\
  \hline
  
    \multirow{2}{*}{$\Mov{4}{2}{1}{3}{2}$} &$\dcon{2}{1}{4}$  & 21,22,4 & $\fe{2}{1}$, $\fe{3}{2}$   \\
    & $\con{3}{2}{4}$ & 21,22 & $\fe{2}{1}$ \\
    \hline
  $\Ins{2}{1}{21}$ & $\con{2}{1}{21}$ & 22 & $\emptyset $\\
    \hline
  \multirow{6}{*}{$\texttt{Upd}(5,Add)$} & $\dcon{5}{1}{8}$ & 22,8 & $\fe{5}{1}$  \\
  & $\dcon{5}{2}{9}$ & 22,8,9 & $\fe{5}{1}, \fe{5}{2}$ \\
  & $\del{2}{2}{5}$ & 22,8,9 & $\fe{2}{2}$\\
  & $\con{2}{2}{22}$ & 8,9 & $\fe{22}{1},\fe{22}{2}$ \\
  & $\con{22}{1}{8}$ & 9 & $\fe{22}{2}$ \\
  & $\con{22}{2}{9}$ & $\emptyset$ & $\emptyset$ \\
  \hline
\end{tabular}
\caption{
The low-level edit script, which comprises atomic edit actions and detailed tracking of nodes and edges, ensures correctness and type safety. The first row indicates the initial state of free nodes and edges.}
\label{tab:atomic_ordering}
\vspace{-20pt}
\end{table}

\vspace{-7.5pt}

\subsection{Extensions}

\paragraph{Supporting type safety}

\cite{truediff} highlight a significant drawback of \textit{Gumtree} and many other popular tree diffing algorithms: the lack of type safety. Specifically, they generate ill-typed intermediate trees that can only be captured by an untyped tree representation. To address this issue, \textit{truediff} provides assurances that all intermediate trees are well-typed.
Importantly, our edit script can achieve the same level of type safety. For a detailed discussion on the type systems, along with the type safety guarantees of our edit scripts, we refer readers to Appendix \ref{sec:type}.




\paragraph{Supporting flexible edits}

\satdiff \ by default assigns each low-level edit the same weight, which can be customized to prioritize different kinds of edits. The default weights may not always align with the programmers' intention. One common scenario is when the edit script involves moving a frequently occurring constant such as 0, 1, or null, while the programmer may find it convenient to delete and retype these constants. To respect this edit habit, we can simply lower the weights of connecting virtual nodes representing these constants.

\section{Experimental Evaluation}
\label{sec:eval}

In this section, we aim to provide a comprehensive evaluation of $\satdiff$ and its capability to uncover high-quality edit scripts within the context of various datasets encompassing different programming languages.
We evaluate $\satdiff $ by comparing it with three state-of-the-art tree diffing algorithms on two datasets from the real world. 

\vspace{-5pt}

\paragraph{Datasets.}
We collect two large datasets used in the prior work~\cite{truediff,nate}. The first dataset consists of approximately 649 real-world Python files (around 1017 lines of code per file) extracted from the latest 500 commits of the popular deep learning framework, \emph{Keras}\footnote{https://keras.io/}, a high-level API for TensorFlow~\cite{tensorflow}. We filter out \textit{simple} commits that only add new code, which we believe is less interesting as the editing is just insertions.
The second dataset consists of 1139 students' functional programming homework snapshots written in OCaml, collected from the NATE project~\cite{nate}.

\vspace{-5pt}

\paragraph{Baselines.}
We compare $\satdiff$ against the latest version of the classic tree difference algorithm \textit{Gumtree}~\cite{gumtree} and its recent follow-up work, \textit{truediff}~\cite{truediff}, which provides type-safe guarantees, on the Python dataset. On the OCaml dataset, we compare $\satdiff$ with NATE's tree diffing algorithm, a generalization of Levenstein distance from strings to trees \footnote{We note that evaluating all baselines on both datasets is unfortunately not an option, due to their limited support for different languages.}.

\vspace{-5pt}

\paragraph{Setup.}


All the experiments in this section are conducted on an Ubuntu22.04LTS machine equipped with 32 GB of RAM and a 2.7GHz processor. The {\maxsat} solver used in {\satdiff} is CASHWMaxSAT~\cite{lei2021cashwmaxsat}, the winner (weighted complete track) of the 2021 MaxSAT Competition. We set the timeout for the solver at 60 seconds. Note that $\satdiff$ still yields a correct edit script at the timeout, despite lacking minimal guarantees. Finally, to conduct a fair comparison, we count a swap operation as two move operations since it is not supported in our chosen baselines.

\subsection{Evaluation on conciseness}

\begin{figure}[th]
    \centering
    \label{fig:patch_size}
    \subfloat[]{
    \centering            \includegraphics[width=0.33\textwidth]{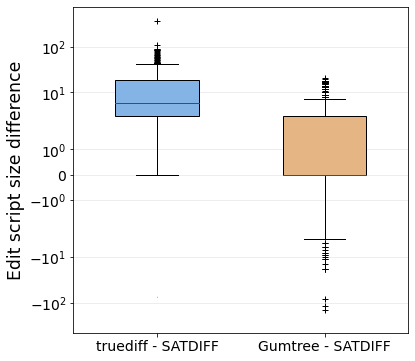}
            \label{fig:keras-diff}
        }
    \subfloat[]{
    \centering
            \includegraphics[width=0.32\textwidth]{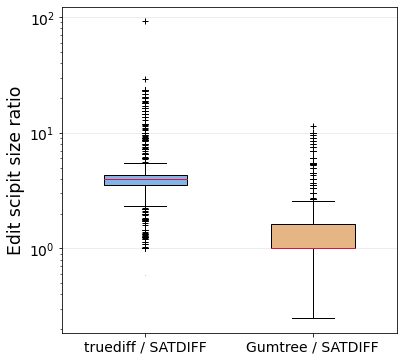}
            \label{fig:keras-div}
        } 
    \subfloat[]{
    \centering
    \includegraphics[width=0.32\textwidth]{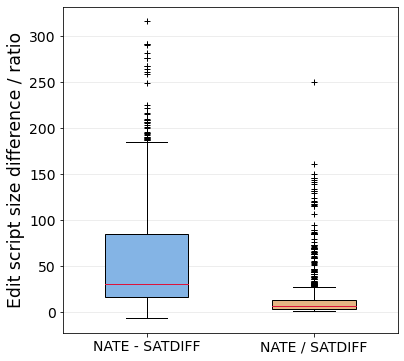}
    \label{fig:nate-vs-satdiff}
    }

    \caption{Absolute differences and relative differences in edit script sizes between different baselines and $\satdiff$. In Fig (a) we calculate the difference between the edit script size of the baseline approaches and \satdiff \ on the Python dataset while in Fig (b) we calculate their ratio. In Fig (c) we present both the difference and ratio on the OCaml dataset.}
    \vspace{-5pt}
\end{figure}

To assess and compare the conciseness of generated edit scripts, we compute edit scripts using baseline approaches and $\satdiff$ on each code file and then record their absolute difference ($|\Delta_{baseline}| - |\Delta_{\satdiff}|$) and the relative difference ($|\Delta_{baseline}| / |\Delta_{\satdiff}|$). We present the statistics of results using boxplots for both Python and OCaml benchmarks.

\vspace{-5pt}

\paragraph{Conciseness on Python benchmark}
The absolute difference in edit script sizes between baselines and $\satdiff$ is shown in Figure~\ref{fig:keras-diff}, in which the region above zero means that baselines require more edits than $\satdiff$. As we can observe, $\satdiff$ consistently outperforms \textit{truediff} across the entire dataset and is almost always superior to or equal to \textit{Gumtree}, except for a few outliers. Upon closer inspection, those outliers correspond to instances where $\satdiff$ timeout. In such scenarios, the solver returns the current best correct solution, which often deviates significantly from minimality. Figure~\ref{fig:keras-div} shows the relative differences in edit script sizes. 
We observe that, on average, \textit{truediff} requires $5.16\times$ more edits compared to $\satdiff$, while \textit{Gumtree} requires $1.72\times$  more edits. However, in specific instances, both \textit{truediff} and \textit{Gumtree} may require more than $10\times$ and $4\times$ the edits of $\satdiff$, respectively. We carefully examine these cases and discover that \textit{truediff} tends to generate a significant number of deletions and/or insertions, whereas $\satdiff$ effectively avoids them by employing flexible move operations. In scenarios involving the deletion of large chunks of code, such as a subtree, \textit{Gumtree} generates numerous deletion actions, while $\satdiff$ can merge them into a single deletion. We will present more detailed case studies in Section \ref{sec:case_study1}.

\vspace{-5pt}

\paragraph{Conciseness on OCaml benchmark}
Figure \ref{fig:nate-vs-satdiff} presents the statistics of absolute differences and relative differences in edit script sizes between NATE's dynamic programming-based tree diffing algorithm and $\satdiff$ on the OCaml dataset. We observe that $\satdiff$ uniformly outperforms NATE's tree diffing algorithm by a large margin. More specifically, NATE requires $14.3 \times$ more edits compared to $\satdiff$ on average. For very few cases, we find that $\satdiff$ produces more edits. Upon closer examination, those cases correspond to instances where $\satdiff$ experienced a timeout, thus lacking its minimal guarantees. Our current implementation of $\satdiff$ exhibits a relatively low timeout ratio in both datasets, standing at approximately $0.8\%$. We plan to enhance the timeout ratio in our future work.

\subsection{Evaluation on runtime}



\begin{wrapfigure}[14]{r}{7cm}
    \vspace{-0.5cm}
        \includegraphics[scale=0.4]{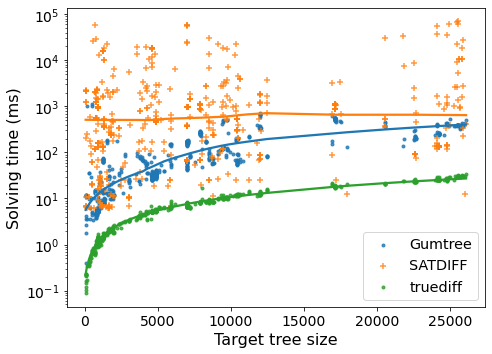}
        \caption{Running time statistics of different approaches.}
        \label{fig:time_treesize}
\end{wrapfigure}

We record and compare the runtime performance of $\satdiff$ and baseline approaches using the Python dataset. Figure~\ref{fig:time_treesize} illustrates the runtime of different approaches for each instance, with instances sorted by target tree size. All data points are fitted with regression lines. We observe that \textit{truediff}, which has a linear runtime, outperforms both \textit{Gumtree} and $\satdiff$ by a significant margin.

Moreover, $\satdiff$ runs faster than \textit{Gumtree} in 119 instances, despite the latter demonstrating an overall runtime advantage. It is worth mentioning that the performance gap between $\satdiff$ and \textit{Gumtree} diminishes dramatically as the target tree grows larger. In addition, $\satdiff$ exhibits a much larger variance compared to its counterpart. This is not surprising, as solving {\maxsat} problems has worst-case exponential time. Despite $\satdiff$ not matching the runtime performance of \textit{truediff} and \textit{Gumtree}, it can, on average, finish computing the minimum edit scripts in less than 1 second. We believe this is a reasonable duration from the user's perspective.

\begin{table}[!h]
    \centering
    \begin{tabular}{|l|rrrr|r|}
    \hline
        
    \multirow{2}{4em}{Configuration} & \multicolumn{4}{c|}{Distribution of running time}  & Encoding size \\ 
    & $< 0.1s$ & $ 0.1-1s$ & $1-60s$ & timeout &  Avg. \# clauses\\
        \hline\hline
        \satdiff & 14.22\% & 57.59\% & 20.37\% & 7.82\% & 128,627\\
        \satdiff+opt 1 & 68.48\% & 21.16\% & 5.62\% & 4.74\% & 47,588\\
        \satdiff+opt 2  & 38.10\% & 44.78\% & 10.80\% & 6.32\% & 75,616\\
        \satdiff+opt 1+opt 2& 83.32\% & 8.78\% & 7.02\% & 0.88\% & 34,574\\  
        \hline
    \end{tabular}
    \caption{Results of the ablation study on our optimization techniques. In the first four columns, we report the percentage of tasks that are completed within 0.1s, within 1s, above 1s, and timeout (> 60s) respectively on the OCaml dataset. We also report the average clause number for different configurations in the last column.}
    \label{tab:ablation}
    \vspace{-20pt}
\end{table}

\subsection{Effectiveness of our encoding optimizations.}
\label{sec:opt}

A significant concern with $\satdiff$ is the possibility of an explosion in the number of clauses required to represent the real-world tree diffing problem as a MaxSAT instance. Initially, our implementation involved duplicating all nodes from the target tree into the source space and encoding potential edges comprising all possible pairs of nodes. This resulted in a serious issue of clause explosion. To address this problem, we implemented two key optimizations:
\begin{enumerate}
    \item Computing \emph{virtual nodes} using the multiset set difference $N_T \setminus N_S$ instead of $N_T$.
    \item Encoding \emph{potential edges} selectively by filtering them based on $label$ and $value$.
\end{enumerate}

Table \ref{tab:ablation} presents the results of our ablation study on optimization techniques using the OCaml benchmark. Notably, in the absence of optimization techniques, we can only compute minimum edit scripts for $14.22\%$ of total instances in less than 0.1 seconds. This ratio dramatically increases to $83.32\%$ after applying the proposed encoding optimizations. The ratio of timeouts also decreases significantly from $7.82\%$ to $0.88\%$, along with the average clause number shrinking to nearly one-quarter of its original value, decreasing from 128,627 to 34,574. This suggests the effectiveness of our encoding optimizations. In addition, we find that introducing fewer \emph{virtual nodes} seems to have a more significant improvement effect than filtering \emph{potential edges}. This is understandable as having fewer nodes in the source space reduces both the number of match variables and edge variables, whereas filtering \emph{potential edges} only decreases the number of edge variables.

\subsection{Case study}
\label{sec:case_study1}




\begin{wrapfigure}[17]{r}{7cm}
    \vspace{-1.2cm}
    \centering
    \includegraphics[scale=0.35, trim={0 1.3cm 0cm 0cm}, clip]{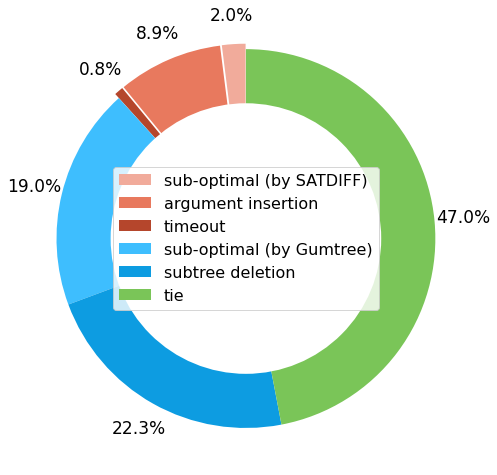}
    \caption{Distribution of the comparison of \textit{Gumtree} and \satdiff. Here red colors represent when \gumtree \ outperforms \satdiff, blue when \satdiff \ outperforms \gumtree. In addition, green refers to the cases where there is a tie.}
    \label{fig:pie}
\end{wrapfigure}

In our analysis of over 600 samples, we compare the conciseness of edit scripts generated by \textit{Gumtree} and \satdiff, and summarize them into three distinct categories: cases where \textit{Gumtree} performs better, cases where {\satdiff} is superior, and cases where they have a tie. To better understand their discrepancies and investigate how \textit{Gumtree} and \satdiff \ could outperform each other, we manually investigate each case and divide them into more refined categories. Figure \ref{fig:pie} presents the distribution of each category in a pie chart.



As we can see, \textit{Gumtree} generates shorter edit scripts than {\satdiff} in 11.7\% of the total samples, highlighted in red sections. It can further be divided into sub-categories: "sub-optimal (by \satdiff)", "insertion argument", and "timeout". "sub-optimal (by \satdiff)" refers to samples where {\satdiff} deduces sub-optimal high-level edit scripts from \textit{minimum} low-level edit scripts, which accounts for 2.0\% of the samples. 
Specifically, because high-level edit actions consume different numbers of atomic actions, a minimum high-level edit script is sometimes not minimum in its low-level state.
"Argument insertion" signifies instances where {\satdiff} requires more edits than \gumtree \ to insert a child node. Specifically, while \gumtree \ can execute a single insertion in one step, \satdiff \ needs to first create a free slot, which may involve additional operations to shift other children, and then complete the insertion. Although this design choice sometimes increases the edit script size, it makes {\satdiff} more structure-aware and can be extended to support type safety when necessary. The "timeout" category comprises instances in which {\satdiff} fails to find their minimum edit scripts within 60 seconds, accounting for only 0.8\% of the samples.





We further investigate the categories that {\satdiff} perform better, which accounts for 41.3\% of the entire dataset (blue colors). We identify two primary sub-categories: "redundant changes" and "subtree deletion". The "sub-optimal (by \gumtree)" sub-category makes up 19.0\% of the total cases. It represents instances where \textit{Gumtree} fails to identify the optimal matching plans leading to finding minimum edit scripts. Figure \ref{fig: case study} is one such sample we extracted from the dataset. While the difference should just be the extra expression highlighted on line 1080 from the source file on the left in Figure \ref{fig: case study}, {\gumtree} suggests unnecessary deletion and reinsertion of \colorbox{del}{\texttt{TestBackend}} and \colorbox{del}{\texttt{object}}. Upon close inspection, we believe that these two tokens are mismatched due to corner cases in heuristic-based similarity computation, especially when dealing with large trees. Because {\satdiff} uncovers node matching leads to minimum low-level edit scripts, such redundant edits we observe in Figure \ref{fig: case study} never happen.
\begin{figure}[t]
\hspace{0.3cm}
\begin{minipage}{0.4\textwidth}
\vspace{0.4cm}
    \begin{lstlisting}[language=Python,basicstyle=\ttfamily\footnotesize,numbers=left,keywordstyle=\color{mykeywords},
    commentstyle=\color{green!50!black},
    stringstyle=\color{red},    showstringspaces=fals,firstnumber=241,numberstyle={\tiny\color{gray}\ttfamily}]
class (*@\colorbox{del}{TestBackend}@*)((*@\colorbox{del}{object}@*)):
        (*@\hspace*{3em}\footnotesize\dots \setcounter{lstnumber}{1078}@*)
    def test_nn_operations(self):
        (*@\colorbox{del}{check\_single\_tensor\_operation(}@*)       (*@\colorbox{del}{'soft-sign', (4,10), WITH\_NP)}@*)
        check_single_tensor_operation('soft-plus',(4,10),WITH_NP) 
        (*@\hspace*{3em}\footnotesize\dots@*)
    \end{lstlisting}
\end{minipage}\hfill
\hspace{0.5cm}
\vspace{-0.5cm}
\begin{minipage}{0.42\textwidth}
\vspace{-0.5cm}
    \begin{lstlisting}[language=Python,basicstyle=\ttfamily\footnotesize,numbers=left,keywordstyle=\color{mykeywords},commentstyle=\color{green!50!black},
    stringstyle=\color{red},
    showstringspaces=false,firstnumber=241,numberstyle={\tiny\color{gray}\ttfamily}]
class (*@\colorbox{ins}{TestBackend}@*)((*@\colorbox{ins}{object}@*)):  
        (*@\hspace*{3em}\footnotesize\dots \setcounter{lstnumber}{1078}@*)
    def test_nn_operations(self):
        check_single_tensor_operation('soft-plus',(4,10),WITH_NP)
        (*@\hspace*{3em}\footnotesize\dots@*)
    \end{lstlisting}   
\end{minipage}
\caption{An example (keras-96-17082f61/backend\_test.py) where \gumtree \  makes unnecessary code edits.}
\label{fig: case study}
\vspace{-0.5cm}
\end{figure}

The "subtree deletion" category represents a collection of cases in which {\satdiff} produces shorter edit scripts than \gumtree, constituting 22.3\% of the dataset. If we visualize the changes by highlighting them in files, having a convenient deletion mechanism does not distinguish this design choice in $\satdiff$ from that in $\gumtree$. However, when comprehending the edit scripts as sequences of actions, the edit script with "subtree deletion" is more concise and readable for end users. For example, if we have an edit script that involves two subtree deletions, \satdiff \ allows the user to efficiently identify the roots of the removed subtrees. On the other hand, 'subtree deletion' can be valuable in computing similarity scores for specific tasks, such as plagiarism detection.

\section{Related work}
\label{sec:related}

Tree diffing approaches operating at the AST level have a significant advantage over Textual-level differencing techniques, such as Unix {\fontfamily{cmss}\selectfont diff}, as they generate edit scripts that incorporate the tree structure of the code. This enables a more comprehensive analysis of code differences, encompassing a wider range of edit actions and providing a deeper understanding of code evolution between versions. One pioneering work in this line of research was proposed by \cite{ChawatheRGW96}. The approach begins by identifying matching nodes between the source and target trees through heuristics that assess their similarity. Once the matching nodes are established, an optimal edit script is derived, comprising operations such as insertions, deletions, moves, and updates. The original heuristics proposed by Chawathe et al. are specifically optimized for comparing flatly structured documents like LaTeX files. Since then, most tree-diffing approaches have adhered to the same paradigm but employed different similarity metrics for node matching. 


ChangeDistiller \cite{ChangeDistilling} utilizes a bigram string similarity measure and inner node similarity weighting to improve node matching on fine-grained syntax trees. In contrast, GumTree \cite{gumtree} employs a top-down greedy search to identify the largest isomorphic subtrees between the source and target trees, enabling fine-grained differencing of Java ASTs. Additionally, JSync \cite{JSync} introduces characteristic vectors for measuring structural similarity between (sub)trees and uses heuristics to identify highly similar pairs as clones, facilitating clone detection in JavaScript. Furthermore, X-Diff \cite{XDiff} designs node signatures as the primary criterion for node matching and presents an effective algorithm that combines XML structure characteristics with standard tree diffing techniques. MTDIFF \cite{MTDIFF} incorporates several optimizations for identifying unchanged code, mapping inner nodes, and other processes. These optimizations result in improved accuracy when detecting code movements. 

Additionally, there is a line of work \cite{MTDIFF,hdiff,truediff} that adopts a different strategy for searching node matching, leveraging hashmaps instead of the similarity scores paradigm as seen in Chawathe et al. More specifically, they compute a unique cryptographic hash for each subtree, and matches are found if and only if two subtrees' hashes are equal. While hashmap methods are more efficient compared to similarity-based methods, which often have a quadratic running time, it is important to note that finding mapping pairs based on the same hash value imposes a more restricted condition. This condition may underestimate the potential matched nodes, thereby limiting the effectiveness of the approach. In contrast, our proposed approach, {\satdiff}, is novel as it does not rely on heuristics. The encoding of {\satdiff} allows exhaustive search on node matching, making it capable of providing an optimal edit script compared to heuristics-based approaches.

\section{Conclusion}
\label{sec:conclusion}

We propose a novel and practical approach, {\satdiff}, for addressing the tree diffing problem. In contrast to existing heuristics-based approaches, {\satdiff} reduces the problem to a {\maxsat} problem and leverages state-of-the-art solvers to find the optimal solution. We then decode the optimal solution into minimum low-level tree edits, which are subsequently used to synthesize high-level edit scripts. {\satdiff} enables type-safe editing, a feature not supported in previous approaches like \textit{Gumtree}. Additionally, we formalize the semantics of edit scripts and prove the correctness and minimality of {\satdiff}. We evaluate {\satdiff} by comparing it with state-of-the-art tree diffing algorithms on real-world datasets. Our experimental results suggest that {\satdiff} outperforms existing approaches significantly in terms of conciseness while maintaining reasonable running time performance.

In the future, we aim to optimize {\satdiff} by leveraging Unix {\fontfamily{cmss}\selectfont diff} to quickly figure out partial code matching. Furthermore, extending {\satdiff} for syntactically-broken programs is another interesting direction. To make a broader real-world impact, we plan to extend {\satdiff} to other languages and develop a plugin based on {\satdiff} for popular IDEs like Visual Studio Code.

\newpage
\bibliographystyle{ACM-Reference-Format}
\bibliography{main}


\pagebreak

\appendix



\section{Type system and type safety of our generated edit scripts}
\label{sec:type}
Inspired by the type system employed in \textit{truediff} \cite{truediff}, we introduce a linear type system \cite{DBLP:conf/ifip2/Wadler90, walker2005substructural} to track and guarantee the type safety of our edit scripts.

\begin{definition}[The linear type system of edit scripts]
We define the type system as a typing relation written:
$\sum \vdash \delta: (N_{free}^\mathcal{T} \bullet E_{free}^\mathcal{T}) \rhd (N_{free}^{\mathcal{T}'} \bullet E_{free}^{\mathcal{T}'})$, where:
\begin{itemize}
    \item $\delta$ is an edit operation, representing the minimum element of an edit script;
    \item $\sum$ are the signatures of node $n$, defined by:
    \begin{align*}
    &\sum ::= \varepsilon \mid \sum, n:sig \\ &sig ::=  \bigl \langle e_1:\mathcal{T}_1,\dots,e_m:\mathcal{T}_m \bigr \rangle \rightarrow \mathcal{T}    
    \end{align*}
    where each $e_i$ is an edge to a subtree of type $\mathcal{T}_i$, and $\mathcal{T}$ is the type of node $n$;
    \item $N_{free}^\mathcal{T}$  are the $free$ nodes with their type, defined as $N_{free}^\mathcal{T} ::= \varepsilon \mid N_{free}^\mathcal{T}, n : \mathcal{T})$;
    \item $E_{free}^\mathcal{T}$ are edges in $free$ state with their type, defined as $(E_{free}^\mathcal{T} ::= \varepsilon  \mid E_{free}^\mathcal{T}, \fe{n}{i} : \mathcal{T} )$

\end{itemize}
\end{definition}

\begin{lemma}
\label{lemma:type_low}
[\emph{Typing rules of low-level edit script $\underline{\Delta}$}]

\begin{prooftree}
\AxiomC{$n \notin N_{free}$}
\AxiomC{$ (n_p,e_i)\notin E_{free}$}
\AxiomC{$\Sigma(n) = \bigl \langle\dots \bigr \rangle \rightarrow \mathcal{T}$}
\AxiomC{$\Sigma(n_p) =  \bigl \langle \dots ,e_i:\mathcal{T}_i,\dots \bigr \rangle  \rightarrow \mathcal{T}'  $}
\LeftLabel{T-\texttt{dcon}}
\QuaternaryInfC{$\Sigma \vdash 
\dcon{n_p}{i}{n}: (N_{free}^\mathcal{T} \bullet E_{free}^\mathcal{T}) \rhd (N_{free}^\mathcal{T},n: \mathcal{T} \bullet E_{free}^\mathcal{T}, (n_p,e_i): \mathcal{T}_i)$}
\end{prooftree}

\begin{prooftree}
\AxiomC{$n \in N_{free}$}
\AxiomC{$ (n_p,e_i) \in E_{free}$}
\AxiomC{$\mathcal{T} <: \mathcal{T}'$}
\LeftLabel{T-\texttt{con}}
\TrinaryInfC{$\sum \vdash 
\con{n_p}{i}{n}$ : 
$(N_{free}^\mathcal{T},n: \mathcal{T} \bullet E_{free}^\mathcal{T}, (n_p,e_i): \mathcal{T}') \rhd (N_{free}^\mathcal{T} \bullet E_{free}^\mathcal{T})$}
\end{prooftree}


\begin{prooftree}
\AxiomC{$n \notin N_{free}$}
\AxiomC{$ (n_p,e_i)\notin E_{free}$}
\AxiomC{$\Sigma(n_p) =  \bigl \langle \dots ,e_i:\mathcal{T}_i,\dots \bigr \rangle  \rightarrow \mathcal{T}'  $}
\LeftLabel{T-\texttt{del}}
\TrinaryInfC{$\Sigma \vdash 
\del{n_p}{i}{n}: (N_{free}^\mathcal{T} \bullet E_{free}^\mathcal{T}) \rhd (N_{free}^\mathcal{T} \bullet E_{free}^\mathcal{T}, (n_p,e_i): \mathcal{T}_i)$}
\end{prooftree}

\begin{prooftree}
\AxiomC{}    \LeftLabel{T-$\underline{\Delta}$-Null}
\alwaysSingleLine
\UnaryInfC{$\sum \vdash \varnothing$: $(N_{free}^\mathcal{T}  \bullet E_{free}^\mathcal{T} ) \rhd (N_{free}^\mathcal{T} \bullet E_{free}^\mathcal{T} )$}
\end{prooftree}

\begin{prooftree}
\AxiomC{$\sum \vdash \delta : (N_{free}^\mathcal{T} \bullet E_{free}^\mathcal{T}) \rhd (N_{free}^{\mathcal{T}'} \bullet E_{free}^{\mathcal{T}'}) $}
\AxiomC{$\sum \vdash \Delta$: $(N_{free}^{\mathcal{T}'} \bullet E_{free}^{\mathcal{T}'} ) \rhd (N_{free}^{\mathcal{T}''}\bullet E_{free}^{\mathcal{T}''})$}
\LeftLabel{T-$\underline{\Delta}$-Cons}
\BinaryInfC{$\sum \vdash (\delta,\Delta)$: $( N_{free}^{\mathcal{T}} \bullet E_{free}^{\mathcal{T}}) \rhd (N_{free}^{\mathcal{T}''} \bullet E_{free}^{T''})$}
\end{prooftree}


\end{lemma}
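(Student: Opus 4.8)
The plan is to read Lemma~\ref{lemma:type_low} as a \emph{soundness} claim: each typing rule must agree with the corresponding operational rule in the $\Downarrow$ semantics of $\underline{\Delta}$, and the rules taken together must guarantee that well-typedness is preserved throughout execution. First I would fix the invariant that ties the typing world to the operational world. Writing $\rho = S'|F$ for an operational state, I say a typing context $(N_{free}^{\mathcal{T}} \bullet E_{free}^{\mathcal{T}})$ \emph{agrees} with $\rho$ when its underlying untyped node and edge sets coincide with $F$, and I call a source space $S'$ \emph{well-typed} under $\Sigma$ when every present edge $(n_p,e_i,n) \in S'$ satisfies: the result type of $\Sigma(n)$ is a subtype of the declared slot type $\mathcal{T}_i$ recorded in $\Sigma(n_p)$. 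The target of the argument is then the preservation statement: if $\Sigma \vdash \underline{\Delta} : (N_{free}^{\mathcal{T}} \bullet E_{free}^{\mathcal{T}}) \rhd (N_{free}^{\mathcal{T}'} \bullet E_{free}^{\mathcal{T}'})$ and the starting context agrees with a well-typed $\rho_0$, then $\rho_0 \vdash_L \underline{\Delta} \Downarrow \rho_1$ with $\rho_1$ well-typed and with free sets exactly described by the output context.

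Next I would prove this by induction on the typing derivation, with \texttt{T-}$\underline{\Delta}$\texttt{-Null} and \texttt{T-}$\underline{\Delta}$\texttt{-Cons} handling the base and composition cases; \texttt{Cons} simply threads the intermediate context $(N_{free}^{\mathcal{T}'} \bullet E_{free}^{\mathcal{T}'})$ through both sub-derivations. The interesting work is the single-action analysis. For \texttt{T-dcon} the premises $n \notin N_{free}$ and $(n_p,e_i) \notin E_{free}$ line up exactly with $n \notin F_0$, $(n_p,e_i) \notin F_0$, and the edge $(n_p,e_i,n)$ being present; the conclusion releases $n : \mathcal{T}$ and $(n_p,e_i) : \mathcal{T}_i$ with the types read directly off the signatures $\Sigma(n)$ and $\Sigma(n_p)$, matching the operational update $F_0 \cup \{n,(n_p,e_i)\}$. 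For \texttt{T-con} the two freed items are consumed, and here the side condition $\mathcal{T} <: \mathcal{T}'$ is precisely what certifies that re-attaching $n$ at the slot keeps the reconstructed edge well-typed. For \texttt{T-del} the edge is released but the node is dropped entirely, matching the fact that \texttt{del} discards the subtree rather than returning it to the free pool.

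The main obstacle I anticipate is the \texttt{del} case, because its operational rule is \emph{recursive}---it first \texttt{dcon}s its children and then deletes them---whereas \texttt{T-del} is a single, non-recursive step that only frees the parent edge $(n_p,e_i)$. Reconciling the two requires an auxiliary lemma that a well-typed subtree can be removed \emph{as a whole} without leaving any dangling free node or edge: by linearity, every internal edge and node of the subtree rooted at $n$ is consumed by the recursive deletion, so the net effect on the free context is exactly the single edge $(n_p,e_i) : \mathcal{T}_i$ that \texttt{T-del} records. I would establish this sub-claim by a secondary induction on the height of the subtree rooted at $n$, using \texttt{T-dcon} to account for each child edge and the induction hypothesis for each child subtree. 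Once this is in place, the subtyping discipline of \texttt{T-con} together with the linear bookkeeping of the free context yields well-typedness preservation for the whole script, establishing that the rules of Lemma~\ref{lemma:type_low} are sound and hence that every intermediate tree produced by $\underline{\Delta}$ is well-typed.
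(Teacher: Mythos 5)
The paper never proves this lemma: despite the ``Lemma'' label, it functions purely as the \emph{definition} of the typing judgment $\Sigma \vdash \delta : (N_{free}^\mathcal{T} \bullet E_{free}^\mathcal{T}) \rhd (N_{free}^{\mathcal{T}'} \bullet E_{free}^{\mathcal{T}'})$, stated without any accompanying argument; the only reasoning in its vicinity is the two-sentence sketch for the downstream theorem that a correct $\underline{\Delta}$ is type safe (``each edge or node is used only when free, thus each action follows the type rules''). So your proposal does not diverge from the paper's proof so much as supply one where the paper has none. Reading the rules as a soundness claim against the $\Downarrow$ semantics and proving preservation by induction on the typing derivation is a legitimate and strictly more informative way to give the statement content: your agreement invariant between typing contexts and operational states $\rho = S'|F$, the threading of the intermediate context through T-$\underline{\Delta}$-Cons, and the observation that the subtyping premise $\mathcal{T} <: \mathcal{T}'$ in T-\texttt{con} is the sole point where well-typedness of a newly created edge is certified, are all exactly the bookkeeping the paper's sketch waves at but never carries out.

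One caveat on your \texttt{del} case, which you rightly single out as the crux. Your auxiliary claim---that by linearity every internal node and edge of the subtree rooted at $n$ is \emph{consumed} by the recursive deletion---is not derivable from the operational rules as literally written. In the paper's rule for $\del{n_p}{i}{n}$, the \texttt{dcon} premise adds $n$ and $\fe{n_p}{i}$ to the free set, each recursive child deletion adds further freed edges of $n$, and the conclusion subtracts only $n$ from $S'_3$ while carrying $F_3$ over unchanged; nothing in the rules removes the freed descendants from the free pool, so the net effect does not collapse to the single freed slot $\fe{n_p}{i} : \mathcal{T}_i$ that T-\texttt{del} records. The reconciliation holds only under the evidently intended reading---confirmed by the tracking in Table~\ref{tab:atomic_ordering}, where $\del{3}{2}{7}$ leaves exactly $\fe{3}{2}$ free and adds no node---that deletion discards the subtree together with its bookkeeping. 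Your secondary induction on subtree height is the right shape, but it should be presented as fixing the \texttt{del} conclusion to also subtract the subtree's nodes and freed internal edges from $F_3$, rather than as a consequence of linearity alone; as stated, that step of your argument would fail against the paper's written semantics even though it succeeds against the semantics the paper plainly intends.
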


\begin{definition}[Type safe edit script] Given a source tree $S'$ and a target tree $T$,  we say an edit script is type safe if: 1)$\sum \vdash \Delta: (N_T \bullet \epsilon) \rhd (\epsilon \bullet \epsilon)$, i.e., there are no $free$ edges and nodes left after applying $\Delta$ to $T$; 2) Each edit action strictly follows type rules.
\end{definition}

\begin{theorem}[$\underline{\Delta}$ is type safe]  
\label{thm:type-safe-low-edit} Any correct low-level edit script $\underline{\Delta}$ is type safe.
\end{theorem}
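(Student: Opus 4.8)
The plan is to derive the judgment $\Sigma \vdash \underline{\Delta} : (N_T \bullet \epsilon) \rhd (\epsilon \bullet \epsilon)$ by induction on the structure of $\underline{\Delta}$, using the rules \mbox{T-$\underline{\Delta}$-Null} and \mbox{T-$\underline{\Delta}$-Cons} of Lemma \ref{lemma:type_low}. Deriving this single judgment discharges both conditions of type safety at once: condition (2) is precisely the requirement that each constituent action be typed by one of \mbox{T-\texttt{con}}, \mbox{T-\texttt{dcon}}, \mbox{T-\texttt{del}} at the step where it fires, while condition (1) is the shape of the overall judgment, whose terminal context $(\epsilon \bullet \epsilon)$ records that no free nodes or edges remain. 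The empty terminal context comes for free from correctness: by Theorem \ref{thm:opt_correct_low} a correct $\underline{\Delta}$ satisfies $S'|N_T \vdash_L \underline{\Delta}(S') \Downarrow T | \{\ \}$, so the operational free-state ledger and the typing free-state ledger track the same sets and both end empty.

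The body of the induction reduces to verifying that each low-level action admits its typing rule in the free-state context reached so far. The operational premises of these rules --- the membership conditions $n \in N_{free}$, $(n_p,e_i) \in E_{free}$ for \mbox{T-\texttt{con}} and their negations for \mbox{T-\texttt{dcon}} and \mbox{T-\texttt{del}} --- are exactly the \emph{appropriateness} conditions that the execution order of $\underline{\Delta}$ already enforces. These hold because Theorem \ref{thm:opt_correct_low} guarantees $\underline{\Delta}$ is arranged in an appropriate order (constructed via Algorithm \ref{alg:get_dependency_graph}), so every action fires only when its node and slot are in the required free or connected state. The signature premises of \mbox{T-\texttt{dcon}} and \mbox{T-\texttt{del}} --- that the involved nodes carry signatures and that the $i$\textsuperscript{th} slot of $n_p$ has type $\mathcal{T}_i$ --- hold trivially, since every AST node has a label and hence a fixed signature.

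The one genuinely non-trivial premise, and the main obstacle, is the subtyping side-condition $\mathcal{T} <: \mathcal{T}'$ of \mbox{T-\texttt{con}}: when $\con{n_p}{i}{n}$ connects $n$ into the $i$\textsuperscript{th} slot of $n_p$, the type $\mathcal{T}$ of $n$ must be a subtype of the type $\mathcal{T}'$ expected by that slot. I would establish this by transporting well-typedness from the target tree $T$ across the matching. Every edge added by a \texttt{con} action lies in $\mathcal{E}_{E+}^{\pi}$, hence is a potential edge matched with some target edge $(n',e_i,n'_c) \in E_T$; by the (filtered) definition of potential edges in Remark \ref{rk:opt_on_edges} this already forces $n_p \leftrightarrow n'$ and $n \leftrightarrow n'_c$, and the isomorphism constraint $\Phi_{\cong}$ together with Theorem \ref{thm:sol_correct} confirm that the resulting tree $T_\mathcal{E} \cong T$ realizes exactly these matches. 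Because matched nodes share label and value, they share signatures: $\Sigma(n_p) = \Sigma(n')$ fixes the slot type $\mathcal{T}' = \mathcal{T}_i$, and $\Sigma(n) = \Sigma(n'_c)$ fixes $\mathcal{T}$. Since $T$ is a genuine well-typed AST, its edge $(n',e_i,n'_c)$ satisfies the subtyping discipline at slot $i$; transporting this fact along the matching yields $\mathcal{T} <: \mathcal{T}'$ for the source-space connection, completing the inductive step.

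The difficulty concentrates entirely in this last step, since it is the only place where the \emph{type} discipline, rather than mere structural well-formedness, must be argued, and it rests on the fact that the \maxsat{} matching constraints $\Phi_{TN}$ and $\Phi_{\cong}$ only ever pair nodes of identical label and value, hence of identical signature. Everything else --- the bookkeeping of free nodes and free edges, the existence of a valid execution order, and the empty terminal state --- is inherited directly from the correctness results of Section \ref{sec:phase2}, so the theorem follows once the subtyping transport is in place.
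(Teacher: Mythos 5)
Your proposal is correct, and its skeleton is the same as the paper's own proof, which is only a four-sentence sketch: correctness of $\underline{\Delta}$ forces the terminal free-state to be empty (condition 1 of type safety), and the appropriate execution order means every node and slot is in the required free state when an action fires, so each action ``follows type rules'' (condition 2). Where you genuinely go beyond the paper is in noticing that appropriateness alone does \emph{not} discharge all premises of the typing rules: the subtyping side-condition $\mathcal{T} <: \mathcal{T}'$ of T-\texttt{con} is a type-level constraint rather than a free-state constraint, and the paper's sketch silently subsumes it under ``follows type rules.'' Your transport argument --- every \texttt{con}-edge lies in $\mathcal{E}_{E+}^{\pi}$, hence by the filtered potential-edge construction (Remark \ref{rk:opt_on_edges}) its endpoints match target nodes of identical label and value, hence identical signature, so the subtyping discipline of the well-typed target tree $T$ carries over to the source-space connection --- is precisely the missing lemma, and it is sound under the paper's definitions, since matching requires label and value equality and $\mathcal{E}_{E+}^{\pi}$ contains only potential edges. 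Two remarks: if the unfiltered potential edges $(N_{S'} \times N_{S'}) \setminus E_S$ were used instead, the endpoint matching would have to be extracted from the hard constraints via Theorem \ref{thm:sol_correct}, which you correctly hedge on; and your framing of the induction --- threading the typed free-state ledger in lockstep with the operational ledger of Theorem \ref{thm:opt_correct_low}, so that condition (2) becomes a corollary of derivability of the single judgment $\Sigma \vdash \underline{\Delta} : (N_T \bullet \epsilon) \rhd (\epsilon \bullet \epsilon)$ rather than a separate per-action check --- is more structured than the paper's treatment. In short, you take the paper's route but repair a real gap in its published sketch; the paper's version buys only brevity.
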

\begin{proof}
We present a proof sketch demonstrating the type safety of $\underline{\Delta}$. Since $\underline{\Delta}$ is correct, it indicates that there are no $free$ edges and nodes left after executing $\Delta$. Moreover, $\underline{\Delta}$ exhibits an appropriate execution order, i.e., an edge or node is used only when it is in the $free$ state. Thus, each action follows type rules, ensuring the type safety of the low-level edit script.
\end{proof}

\begin{lemma}
\label{lemma:type_low}
[\emph{Typing rules of high-level edit script $\overline{\Delta}$}]
\vspace{10 pt}
\begin{prooftree}
    \AxiomC{$n \notin N_{free}$}
    \AxiomC{$n' \notin N_{free}$}
    \AxiomC{$(n_p,e_i) \notin E_{free}$}
    \AxiomC{$(n_p',e_j) \notin E_{free}$}
    \LeftLabel{T-\texttt{Swp}}
    \QuaternaryInfC{$\sum \vdash \Swp{n}{n_p}{i}{n'}{n_p'}{j} : (N_{free}^\mathcal{T} \bullet E_{free}^\mathcal{T})\rhd (N_{free}^\mathcal{T} \bullet E_{free}^\mathcal{T}) $ }
\end{prooftree}

\begin{prooftree}
    \AxiomC{$n \notin N_{free}$}
    \AxiomC{$ (n_p,e_i)\notin E_{free}$}
    \AxiomC{$\Sigma(n_p) =  \bigl \langle \dots ,e_i:\mathcal{T}_i,\dots \bigr \rangle  \rightarrow \mathcal{T}'  $}
    \LeftLabel{T-\texttt{Del}}
    \TrinaryInfC{$\sum \vdash \Del{n_p}{i}{n}:(N_{free}^\mathcal{T} \bullet E_{free}^\mathcal{T}) \rhd (N_{free}^\mathcal{T} \bullet E_{free}^\mathcal{T}), (n_p,e_i): \mathcal{T}_i)$}
\end{prooftree}

\begin{prooftree}
    \AxiomC{$n' \in N_{free}$}
    \AxiomC{$\Sigma(n.par) =  \bigl \langle \dots ,e_i:\mathcal{T}_i,\dots \bigr \rangle  \rightarrow \mathcal{T}' $}
    \AxiomC{$\mathcal{T} <: \mathcal{T}_i$}
    \LeftLabel{T-\texttt{Upd}}
    \TrinaryInfC{$\sum \vdash \Upd{n}{l}{v}: (N_{free}^\mathcal{T}, n':\mathcal{T} \bullet E_{free}^\mathcal{T})\rhd (N_{free}^\mathcal{T} \bullet E_{free}^\mathcal{T})$}
\end{prooftree}

\begin{prooftree}
    \AxiomC{$n \in N_{free}$}
    \AxiomC{$ (n_p,e_i) \in E_{free}$}
    \AxiomC{$\mathcal{T} <: \mathcal{T}'$}
    \LeftLabel{T-\texttt{Ins}}
    \TrinaryInfC{$\sum \vdash \texttt{Insert}(n_p,e_i,n): (N_{free}^\mathcal{T},n: \mathcal{T} \bullet E_{free}^\mathcal{T}, (n_p,e_i): \mathcal{T}') \rhd (N_{free}^\mathcal{T} \bullet E_{free}^\mathcal{T})$}
\end{prooftree}

\begin{prooftree}
    \AxiomC{$ (n_p',e_j) \in E_{free}$}
    \AxiomC{$\Sigma(n) = \bigl \langle\dots \bigr \rangle \rightarrow \mathcal{T}$}
    \AxiomC{$\mathcal{T} <: \mathcal{T}'$}
\AxiomC{$\Sigma(n_p) =  \bigl \langle \dots ,e_i:\mathcal{T}_i,\dots \bigr \rangle  \rightarrow \mathcal{T}'' $}
    \LeftLabel{T-\texttt{Mov}}
    \QuaternaryInfC{$\sum \vdash 
    \Mov{n}{n_p}{i}{n_p'}{j}: (N_{free}^\mathcal{T} \bullet E_{free}^\mathcal{T}, (n_p',e_j): \mathcal{T}') \rhd (N_{free}^\mathcal{T}  \bullet E_{free}^\mathcal{T}, (n_p,e_i): \mathcal{T}_i) $ }
\end{prooftree}

\hspace{-0.5cm}
\begin{minipage}{0.47\textwidth}
    \begin{prooftree}
    \AxiomC{}    
    \LeftLabel{T-$\underline{\Delta}$-Null}
    \alwaysSingleLine
    \UnaryInfC{$\sum \vdash \varnothing$: $(N_{free}^\mathcal{T}  \bullet E_{free}^\mathcal{T} ) \rhd (N_{free}^\mathcal{T} \bullet E_{free}^\mathcal{T} )$}
    \end{prooftree}
\end{minipage}\hfill
\begin{minipage}{0.47\textwidth}
\begin{prooftree}
    \AxiomC{$\sum \vdash \delta : (N_{free}^\mathcal{T} \bullet E_{free}^\mathcal{T}) \rhd (N_{free}^{\mathcal{T}'} \bullet E_{free}^{\mathcal{T}'}) $}
    \noLine
    \UnaryInfC{$\sum \vdash \Delta$: $(N_{free}^{\mathcal{T}'} \bullet E_{free}^{\mathcal{T}'} ) \rhd (N_{free}^{\mathcal{T}''}\bullet E_{free}^{\mathcal{T}''})$}
    \LeftLabel{T-$\underline{\Delta}$-Cons}
    \UnaryInfC{$\sum \vdash (\delta,\Delta)$: $( N_{free}^{\mathcal{T}} \bullet E_{free}^{\mathcal{T}}) \rhd (N_{free}^{\mathcal{T}''} \bullet E_{free}^{\mathcal{T}''})$}
\end{prooftree} 
\end{minipage}

\end{lemma}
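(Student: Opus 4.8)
The statement collects the typing rules for each high-level edit action, so the task is to justify that each such rule is \emph{admissible}: derivable from the low-level typing rules T-\texttt{con}, T-\texttt{dcon}, and T-\texttt{del} together with the sequencing rules T-$\underline{\Delta}$-Null and T-$\underline{\Delta}$-Cons. The plan is to proceed by case analysis on the high-level action, using the big-step decomposition fixed in Definition~\ref{def:sem_high} to replay each high-level rule as a composition of the corresponding low-level rules, and to check that the intermediate free-node/free-edge contexts thread together consistently so that the net effect matches the displayed conclusion.

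I would take the easy cases first. For T-\texttt{Del}, the semantics equates $\Del{n_p}{i}{n}$ with a single $\del{n_p}{i}{n}$, so the rule is literally T-\texttt{del} lifted to $\vdash_H$ and nothing is added. For T-\texttt{Mov}, the semantics decomposes $\Mov{n}{n_p}{i}{n_p'}{j}$ as $\dcon{n_p}{i}{n}$ followed by $\con{n_p'}{j}{n}$. I would start from $(N_{free}^\mathcal{T} \bullet E_{free}^\mathcal{T}, (n_p',e_j):\mathcal{T}')$, apply T-\texttt{dcon} to expose $n:\mathcal{T}$ and $(n_p,e_i):\mathcal{T}_i$ as free, then apply T-\texttt{con} to consume $n$ into the free slot $(n_p',e_j)$ — this is exactly where the obligation $\mathcal{T} <: \mathcal{T}'$ is discharged — leaving $(N_{free}^\mathcal{T} \bullet E_{free}^\mathcal{T}, (n_p,e_i):\mathcal{T}_i)$, which matches the conclusion of T-\texttt{Mov}, with the side conditions being precisely the premises of the two low-level rules.

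The T-\texttt{Upd} and T-\texttt{Ins} cases are analogous. For \texttt{Upd}, the semantics gives $\del{n_p}{i}{n}$ then $\con{n_p}{i}{n'}$ with $n'$ already free, so T-\texttt{del} vacates the slot $(n_p,e_i):\mathcal{T}_i$ and T-\texttt{con} refills it from the free node $n':\mathcal{T}$, discharging $\mathcal{T} <: \mathcal{T}_i$ and returning to $(N_{free}^\mathcal{T} \bullet E_{free}^\mathcal{T})$; here the side condition $\Sigma(n.par)=\langle\dots,e_i:\mathcal{T}_i,\dots\rangle \rightarrow \mathcal{T}'$ comes directly from T-\texttt{del}. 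For \texttt{Ins}, the single $\con{n_p}{i}{n}$ reduces the rule to T-\texttt{con}, and I would note that the freshly exposed child slots of the inserted virtual node are recorded as new free edges exactly as the \texttt{Ins} clause of Definition~\ref{def:sem_high} prescribes. The structural rules T-$\underline{\Delta}$-Null and T-$\underline{\Delta}$-Cons are identical in shape to their low-level counterparts and are justified by the same sequencing of the high-level semantics.

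I expect the main obstacle to be T-\texttt{Swp}. Its distinguishing feature is the cyclic dependency between the two constituent moves, which forbids serializing it as $\Mov{}{}{}{}{}$ after $\Mov{}{}{}{}{}$; instead the semantics interleaves the actions as $\dcon{n_p'}{j}{n'}$, then the embedded $\Mov{n}{n_p}{i}{n_p'}{j}$, then a trailing \texttt{con} reconnecting $n'$ into $(n_p,e_i)$. The careful part is to track the free context across this interleaving — after the leading \texttt{dcon} both $n'$ and the slot $(n_p',e_j)$ are free; the embedded \texttt{Mov} (whose typing I would reuse from the T-\texttt{Mov} case) frees $(n_p,e_i)$ while consuming $(n_p',e_j)$; and the trailing \texttt{con} consumes $n'$ together with $(n_p,e_i)$ — and then to verify that the net transformation leaves the context unchanged, $(N_{free}^\mathcal{T} \bullet E_{free}^\mathcal{T}) \rhd (N_{free}^\mathcal{T} \bullet E_{free}^\mathcal{T})$, as claimed. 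I would also flag that soundness of the swap requires both subtyping obligations (namely $n$ into $(n_p',e_j)$ and $n'$ into $(n_p,e_i)$) to hold; these arise from the two embedded T-\texttt{con} applications and should be recorded as implicit premises even though the displayed T-\texttt{Swp} rule suppresses them.
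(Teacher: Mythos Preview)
The paper does not actually supply a proof for this lemma: it is stated purely as a presentation of the high-level typing rules, with no accompanying argument. In effect the paper treats it as definitional --- the rules are simply laid down, and the only proved result in this section is the subsequent type-safety theorem for $\overline{\Delta}$, whose proof sketch says just that merging low-level subsequences into high-level actions preserves the appropriate execution order and hence type safety.

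Your proposal goes further than the paper by arguing \emph{admissibility} of each high-level rule from the low-level rules T-\texttt{con}, T-\texttt{dcon}, T-\texttt{del} via the semantic decompositions of Definition~\ref{def:sem_high}. That is a perfectly sensible reading of what the lemma ought to assert, and your case analysis (especially the careful threading of the free-node/free-edge context through the interleaved \texttt{Swp} decomposition, and your observation that the two suppressed subtyping obligations should really be recorded as premises) is exactly the substance one would want behind the paper's one-line sketch of the type-safety theorem. So your approach is not wrong; it simply supplies an argument where the paper supplies none, and in doing so it is more explicit about side conditions than the displayed rules themselves are.
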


\begin{theorem}[$\overline{\Delta}$ is type safe]  
\label{thm:type-safe-low-edit} The $\overline{\Delta}$, synthesized from a correct $\underline{\Delta}$ using dependency graph $K$, is also type safe.
\end{theorem}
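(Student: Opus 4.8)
The plan is to proceed by induction on the structure of $\overline{\Delta}$, reusing two facts already established above: that any correct low-level edit script $\underline{\Delta}$ is type safe, and that the synthesized $\overline{\Delta}$ preserves both the correctness and the appropriate execution order of $\underline{\Delta}$. Since $\overline{\Delta}$ is obtained from $\underline{\Delta}$ purely by grouping consecutive low-level actions according to the templates for \texttt{Ins}, \texttt{Del}, \texttt{Upd}, \texttt{Mov}, and \texttt{Swp}, the core of the argument is to show that each high-level typing rule from the lemma above is an \emph{admissible} (derived) rule, obtained by chaining the low-level rules T-\texttt{con}, T-\texttt{dcon}, and T-\texttt{del} over exactly the atomic actions that the corresponding template merges.

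Concretely, I would verify the five merge cases one at a time. For \texttt{Mov}, whose template is $\con{n_p'}{j}{n} \rightarrow \dcon{n_p}{i}{n}$, I would compose T-\texttt{dcon} (which frees $n$ and the slot $(n_p,e_i)$, recording their types) with T-\texttt{con} (which consumes $n$ into the free slot $(n_p',e_j)$ under the side condition $\mathcal{T} <: \mathcal{T}'$) and check that the net state transition and side conditions coincide with rule T-\texttt{Mov}. The cases \texttt{Ins} and \texttt{Del} are immediate, since each is a single \texttt{con} or \texttt{del} whose typing rule is literally T-\texttt{con} or T-\texttt{del} specialized to a virtual node or a discarded subtree. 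The interesting cases are \texttt{Upd} and \texttt{Swp}: for \texttt{Upd} I would thread the disconnection of the children, the \texttt{del} of the old node, the \texttt{con} of the replacement node $n'$, and the reconnection of the children, checking that the intermediate free-node and free-edge contexts cancel so that the only net effect is consuming $n':\mathcal{T}$ subject to $\mathcal{T} <: \mathcal{T}_i$ against $\Sigma(n_p)$, matching T-\texttt{Upd}; for \texttt{Swp} I would unfold its operational rule ($\texttt{dcon}$; $\texttt{Mov}$; $\texttt{con}$) and confirm that the two interleaved moves leave the free context unchanged, matching T-\texttt{Swp}.

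With admissibility of the five derived rules in hand, the induction closes cleanly. The base case is the empty script, handled by the null typing rule. In the inductive step $\overline{\Delta} = \delta; \overline{\Delta}'$, preservation of the appropriate execution order guarantees that the free-node and free-edge context entering $\delta$ is exactly the context the underlying low-level script presents at that point, so the admissible rule for $\delta$ applies and yields the same outgoing context; the induction hypothesis then types $\overline{\Delta}'$. Because $\underline{\Delta}$ is type safe it starts from $(N_T \bullet \epsilon)$ and ends in $(\epsilon \bullet \epsilon)$, and since merging does not alter the edit effect, $\overline{\Delta}$ inherits the same endpoints, establishing condition (1) of the type-safety definition while the derived rules establish condition (2).

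I expect the main obstacle to be the rule-by-rule admissibility check, and within it the \texttt{Upd} and \texttt{Swp} cases, where several atomic actions are merged and one must track carefully how the free contexts and the subtyping side conditions ($\mathcal{T} <: \mathcal{T}_i$ and $\mathcal{T} <: \mathcal{T}'$) telescope across the composition. A particularly delicate point is ensuring the well-typedness of the freshly inserted replacement node in \texttt{Upd} against the parent signature $\Sigma(n_p)$, which is precisely where the linearity of the type system must be invoked to guarantee that each freed slot and free node is consumed exactly once.
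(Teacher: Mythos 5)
Your proposal is correct and follows essentially the same route as the paper, whose proof is a two-sentence sketch observing precisely what you make central: synthesis merges subsequences of low-level actions without changing the appropriate execution order, so type safety transfers from $\underline{\Delta}$ to $\overline{\Delta}$. Your rule-by-rule admissibility checks (composing T-\texttt{dcon}, T-\texttt{con}, and T-\texttt{del} to derive the five high-level typing rules) are a careful elaboration of that same argument rather than a different proof strategy, and they fill in details the paper leaves implicit.
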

\begin{proof}
We present a proof sketch. Note that the synthesis process only merges certain subsequences of actions into high-level edit actions without changing the appropriate execution order of $\underline{\Delta}$. Thus$\overline{\Delta}$ is type safe given that $\underline{\Delta}$ is type safe .

\end{proof}

\end{document}